\let\NAT@parse\undefined
\newtheorem{theorem}{Theorem}
\newtheorem{lemma}{Lemma}
\newtheorem{definition}{Definition}
\newtheorem{corollary}{Corollary}
\newtheorem{assumption}{Assumption}
\newtheorem{remark}{Remark}
\begin{document}

\title{Generalization Bounds for Transformer Channel Decoders}

\author{Qinshan Zhang, Bin Chen, Yong Jiang, and Shu-Tao Xia
\thanks{Qinshan Zhang, Yong Jiang, and Shu-Tao Xia are with Tsinghua Shenzhen International Graduate School, Shenzhen, China, and Pengcheng Laboratory, Shenzhen, China. Email: zhangqs24@mails.tsinghua.edu.cn, \{jiangy, xiast\}@sz.tsinghua.edu.cn.}
\thanks{Bin Chen is with Harbin Institute of Technology (Shenzhen), University Town of Shenzhen, Nanshan District, Shenzhen, 518055, China. Email: chenbin2021@hit.edu.cn. (Corresponding Author)}}



\maketitle

\begin{abstract}
Transformer channel decoders, such as the Error Correction Code Transformer (ECCT), have shown strong empirical performance in channel decoding, yet their generalization behavior remains theoretically unclear. This paper studies the generalization performance of ECCT from a learning-theoretic perspective. By establishing a connection between multiplicative noise estimation errors and bit-error-rate (BER), we derive an upper bound on the generalization gap via bit-wise Rademacher complexity. The resulting bound characterizes the dependence on code length, model parameters, and training set size, and applies to both single-layer and multi-layer ECCTs. We further show that parity-check–based masked attention induces sparsity that reduces the covering number, leading to a tighter generalization bound. To the best of our knowledge, this work provides the first theoretical generalization guarantees for this class of decoders.
\end{abstract}

\begin{IEEEkeywords}
Neural decoders, Transformer, channel coding, generalization gap, Rademacher complexity.
\end{IEEEkeywords}



\section{Introduction}

Deep neural networks (NNs) have been widely investigated in next-generation communication systems, with applications ranging from channel estimation to signal detection and resource allocation \cite{soltani2019DLbasedchannelestimation,hu2020DLforChannelEsitimationSurvey,arvinte2022mimoChannelEstimationusingGenerativeModels,khani2020adaptiveSignalDetectionforMIMO,gu2021knowledgeassistedDLin5Gto6G,jiang2022accurateChannelPredictionTransformer}. In channel coding, deep learning has also shown strong potential, where existing studies can be broadly categorized into two directions. One line of work explores data-driven end-to-end learning paradigms that jointly optimize the encoder and decoder, resulting in autoencoder–based architectures \cite{jiang2019turboAE,makkuva2021KOCodes,jamali2022productae,zhang2023adaptiveProductAE,choukroun2024learningECC}. The other focuses on improving the decoding performance of conventional error-correcting codes by adopting neural networks into decoding algorithms. Since the NP-hardness of the maximum-likelihood criterion limits the practical applicability of optimal decoding, designing efficient \emph{NN-based} soft-decision decoders capable of achieving high reliability remains a challenging and active research problem.

Existing NN-based decoders can be broadly categorized into model-based and model-free approaches. Model-based decoders reformulate conventional belief propagation (BP) decoding in an unrolled manner, yielding a parameterized trellis that can be interpreted as a feed-forward neural network. This network enables end-to-end learning of the decoding parameters, thereby adaptively balancing the relative importance of messages across iterations. Such decoders and their variants are commonly referred to as neural belief propagation (NBP)-like decoders \cite{nachmani2016learningNBPconference,nachmani2018deeplearningNBP,lugosch2017neuraloffsetNBP,buchberger2020pruningNBPJSAC,dai2021learningtodecodeprotographldpcNBP,chen2021cyclicallyNBP,zhang2024section-wiseNPBforQCLDPC}. They can be viewed as a generalization of the classical BP decoder, in the sense that setting all learnable weights to 1 recovers the standard BP algorithm. From a coding-theoretic perspective, BP decoders are widely regarded as achieving near-optimal asymptotic performance for appropriately designed codes; intuitively, this suggests similar behavior for NBP-like decoders.
From a learning-theoretic standpoint, recent studies have further established generalization bounds for NBP-like decoders, characterizing how their worst-case generalization behavior scales with the code parameters and the model complexity \cite{adiga2024generalizationboundforNBP}.

Model-free decoders extend the decoder architecture to general NNs. Decoding is performed by estimating an equivalent \emph{multiplicative} noise model in binary-input symmetric-output (BISO) channels \cite{richardson2002LDPCcapacityundermessagepassingdecoding}, which significantly mitigates overfitting to specific codewords in the training set \cite{bennatan2018syndromebasedapproach}. In particular, as a general and powerful neural architecture, the Transformer\cite{vaswani2017attentionisallyouneed} has demonstrated outstanding performance across a wide range of tasks, motivating its adoption for channel decoding. Indeed, empirical studies show that the Error Correction Code Transformer (ECCT) and its variants, as representative model-free decoders, achieve performance comparable to or even surpassing that of conventional decoding algorithms \cite{choukroun2022ECCT,park2025crossmpt,lau2025interplayBPandTransformerECCT,park2025multipleMasktforECCT}.

However, to the best of our knowledge, existing work lacks a theoretical characterization of the generalization performance of ECCT. Specifically, the generalization error is defined as the gap between the true bit-error-rate (BER) and the empirical BER achieved during training, which serves as a measure of the model’s tendency to overfit \cite{mohri2018foundationsofMachineLearningML}. This naturally raises a fundamental question: \emph{Given an ECCT, can we establish an upper bound on its generalization error that characterizes its worst-case guarantees on its performance on unseen codewords? Moreover, which parameters--such as the code parameters, model architecture, and training set size--influence this bound, and in what manner?}

In this paper, we first establish a theoretical connection between the error rate of multiplicative noise estimation and the BER of the decoded codeword. Building on this connection, we characterize an upper bound on the generalization error of ECCT via the Rademacher complexity associated with individual bit positions, referred to as the bit-wise Rademacher complexity. Following the established line of theoretical analyses for Transformer-based models \cite{edelman2022inductivebiasTransformerGeneralization,trauger2024sequence}, we begin with ECCTs consisting of a single attention layer, as this setting allows us to explicitly reveal how model and code parameters—other than the depth—affect the generalization bound. The results are then extended to multi-layer ECCT decoders. Our main contributions can be summarized as follows: (1) Based on learning-theoretic analysis, we characterize the bit-wise Rademacher complexity of ECCT with respect to the model weights and derive corresponding generalization bounds for both single-layer and multi-layer ECCTs, showing how these bounds scale with key parameters such as the code length, embedding dimension, and training set size. (2) Focusing on a key structural ingredient of ECCT, we theoretically reveal the benefit of sparsity induced by parity-check–based masked attention. In particular, it can significantly reduce the global Lipschitz bound with respect to the model weights, with a square-root dependence on the attention sparsity, which in turn decreases the covering number of the hypothesis space of the decoder function class and yields a tighter generalization bound compared to the unmasked version.

\section{Preliminaries}

\subsection{Problem Statement}
Let $\mathcal{X}$ be the label (or codeword) space,  $\mathcal{Y}$ be the sample space, and $\mathcal{G}$ be the function class defined as $\mathcal{G}=\{g: \mathcal{Y} \rightarrow \mathcal{X}\}$. Consider a decoder $g \in \mathcal{G}$ trained on a dataset $\{(\mathbf{y}_i,\mathbf{x}_i)\}_{i=1}^{m}$ of size $m$, using a BER loss function $l_{\text{BER}}$ defined as $l_{\text{BER}}\left(g(\mathbf{y}), \mathbf{x}\right)=\frac{1}{n}\sum_{j=1}^{n} \mathbb{I}(\hat{\mathbf{x}}[j] \neq \mathbf{x}[j])$, 
where $\mathbb{I}(\cdot)$ denotes the indicator function. 

The objective of a decoder is to minimize the empirical risk, defined as $ \hat{\mathcal{R}}_{\text{BER}}(g)=\frac{1}{m} \sum_{i=1}^{m} l_{\text{BER}}\left(g\left(\mathbf{y}_{i}\right), \mathbf{x}_{i}\right)$.
where $\mathbf{x}_i$ denotes the input and $y_{i}$ denotes the corresponding label. The loss function measures the difference between the prediction $g\left(\mathbf{y}_{i}\right)$ and the corresponding target $\mathbf{x}_{i}$. Let $\mathcal{D}$ be a distribution over $\mathcal{X}\times\mathcal{Y}$. 

The true risk is defined as $\mathcal{R}_{\text{BER}}(g)=\mathbb{E}_{(\mathbf{y}, \mathbf{x})\sim \mathcal{D}}\left[l_{\text{BER}}(g(\mathbf{y}), \mathbf{x})\right]$.

The generalization gap is defined as the difference between the empirical risk and the true risk: $\mathcal{R}_{\text{BER}}(g) - \hat{\mathcal{R}}_{\text{BER}}(g)$. The main goal of this paper is to derive an upper bound on this gap (i.e., generalization bound), and to characterize how this bound depends on the code and model parameters, and the training set size.

\subsection{Rademacher Complexity and Generalization Bound}

We briefly introduce the Rademacher complexity for a function class $\mathcal{F}$ and its relationship to the covering number and generalization bounds.

\begin{definition}[Empirical Rademacher Complexity]
    For a function class $\mathcal{F}$ trained with a dataset $\{(\mathbf{x}_i,\mathbf{y}_i)\}_{i=1}^{m}$ of size $m$, the empirical Rademacher complexity is defined as
    \begin{equation}
        R_{m}\left(\mathcal{F}\right) \triangleq \underset{\sigma}{\mathbb{E}}\left[\sup _{f \in \mathcal{F}} \frac{1}{m} \sum_{i=1}^{m} \sigma_{i} l\left(f\left(\mathbf{x}_{i}\right), \mathbf{y}_{i}\right)\right],
    \end{equation}
    where $\sigma_{i}$'s are i.i.d. Rademacher random variables, i.e. $\operatorname{Pr}\left(\sigma_{i}=1\right)=\operatorname{Pr}\left(\sigma_{i}=-1\right)=\frac{1}{2}$, and $l$ is the loss function.
\end{definition}

\begin{definition}[Covering Number]
      The covering number $\mathcal{N}\left(\mathcal{F}, \epsilon,\|\cdot\|_{k}\right)$ of the function class $\mathcal{F}$ with respect to the $k$-th norm for $\epsilon>0$ is defined as
     \begin{gather} 
        \mathcal{N}\left(\mathcal{F}, \epsilon,\|\cdot\|_{k}\right)=\min _{n}\left|\left\{g_{1}, \cdots, g_{n}\right\}\right|,  \\
        \text { s.t. } \min _{1 \leq i \leq n}\left\|f(\mathbf{x})-g_{i}(\mathbf{x})\right\|_{k} \leq \epsilon . \label{def: covering number}
     \end{gather}
     where \eqref{def: covering number} must be satisfied for any $f \in \mathcal{F}$ and input $\mathbf{x}$. The set $\left\{g_{1}, \cdots, g_{n}\right\} \subseteq \mathcal{F}$ is called the $\epsilon$-cover of $\mathcal{F}$.
\end{definition}

To relate the Rademacher complexity to generalization bounds via the covering number, a standard approach is to employ Dudley’s metric entropy integral. This integral admits several variants; here we present a modified version for bounded function classes.
\begin{lemma}[See \cite{bartlett2017spectrallynormalizedmarginbounds}]
For a real-valued function class $\mathcal{F}$ taking values in $[0,1]$, we have
    \begin{equation}
        R_{m}(\mathcal{F}) \leq \inf_{\alpha>0}\left(\frac{4 \alpha}{\sqrt{m}}+\frac{12}{m} \int_{\alpha}^{\sqrt{m}} \sqrt{\log \mathcal{N}\left(\mathcal{F}, \epsilon,\|\cdot \|_{2} \right)}d \epsilon\right).
        \label{eq: Rademacher by Dudley inter bound in PRELIMINARIES}
    \end{equation}
\end{lemma}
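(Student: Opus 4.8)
The plan is to prove this via the classical \emph{chaining} (Dudley) argument, since the statement is precisely the metric-entropy integral bound of Bartlett, Foster, and Telgarsky specialized to function classes with range $[0,1]$. First I would fix the sample $S=\{(\mathbf{x}_i,\mathbf{y}_i)\}_{i=1}^m$ and pass to the coordinate projection $\mathcal{F}|_S=\{(f(\mathbf{x}_1),\dots,f(\mathbf{x}_m)):f\in\mathcal{F}\}\subseteq[0,1]^m$, so that $R_m(\mathcal{F})=\frac{1}{m}\,\mathbb{E}_{\sigma}\sup_{v\in\mathcal{F}|_S}\langle\sigma,v\rangle$, where $\langle\cdot,\cdot\rangle$ is the Euclidean inner product on $\mathbb{R}^m$ and the covering numbers are taken with respect to the (unnormalized) empirical $\ell_2$ distance on this projection. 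The key structural observation is that the diameter of $\mathcal{F}|_S$ is at most $\sqrt{m}$, because each coordinate lies in $[0,1]$; this is the source of the integral's upper limit $\sqrt{m}$, since for $\epsilon>\sqrt{m}$ a single point covers the class and $\log\mathcal{N}=0$.

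Next I would build a dyadic hierarchy of covers. For scales $\epsilon_j=2^{-j}\sqrt{m}$, $j=0,1,2,\dots$, let $V_j$ be a minimal $\epsilon_j$-cover of $\mathcal{F}|_S$, so that $|V_j|=\mathcal{N}(\mathcal{F},\epsilon_j,\|\cdot\|_2)$ and $|V_0|=1$ (one point suffices at the diameter scale). Writing $\hat v_j$ for the nearest point of $V_j$ to a given $v$, and choosing $N$ so that $\epsilon_N$ is the finest scale just above $\alpha$, I would use the telescoping decomposition
\[
v=\hat v_0+\sum_{j=1}^{N}\bigl(\hat v_j-\hat v_{j-1}\bigr)+\bigl(v-\hat v_N\bigr),
\]
and bound $\mathbb{E}_\sigma\sup_v\langle\sigma,v\rangle$ level by level. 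The coarsest term contributes nothing since $\hat v_0$ is a single fixed point, and the residual term is controlled by Cauchy--Schwarz, $\langle\sigma,v-\hat v_N\rangle\le\|\sigma\|_2\,\epsilon_N=\sqrt{m}\,\epsilon_N$.

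The core estimate is the per-level bound. Each increment satisfies $\|\hat v_j-\hat v_{j-1}\|_2\le\epsilon_j+\epsilon_{j-1}=3\epsilon_j$, and the number of distinct increments at level $j$ is at most $|V_{j-1}|\,|V_j|\le|V_j|^2$. Since a Rademacher sum $\langle\sigma,w\rangle$ is sub-Gaussian with parameter $\|w\|_2$, Massart's finite-class maximal inequality yields
\[
\mathbb{E}_{\sigma}\,\sup_{v}\,\langle\sigma,\hat v_j-\hat v_{j-1}\rangle\le 3\epsilon_j\sqrt{2\log|V_j|^2}=6\,\epsilon_j\sqrt{\log\mathcal{N}(\mathcal{F},\epsilon_j,\|\cdot\|_2)}.
\]
Summing over levels and exploiting the monotonicity of $\epsilon\mapsto\mathcal{N}(\mathcal{F},\epsilon,\|\cdot\|_2)$, each term $\epsilon_j\sqrt{\log\mathcal{N}(\epsilon_j)}$ is dominated by $2\int_{\epsilon_{j+1}}^{\epsilon_j}\sqrt{\log\mathcal{N}(\mathcal{F},\epsilon,\|\cdot\|_2)}\,d\epsilon$, so the geometric sum telescopes into $12\int_{\alpha}^{\sqrt{m}}\sqrt{\log\mathcal{N}}\,d\epsilon$. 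Dividing by $m$, absorbing the residual $\sqrt{m}\,\epsilon_N$ and the discretization slack into the term $\frac{4\alpha}{\sqrt{m}}$, and finally taking the infimum over $\alpha>0$ gives the stated bound.

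The main obstacle is not conceptual but the careful bookkeeping of constants: matching the factors $4$ and $12$ requires tracking the slack between the continuous threshold $\alpha$ and the nearest dyadic scale $\epsilon_N\in(\alpha,2\alpha]$, and correctly combining the residual and coarsest-level contributions. A second subtlety to verify is the normalization convention for the covering-number metric, namely that $\mathcal{N}(\mathcal{F},\epsilon,\|\cdot\|_2)$ refers to the unnormalized empirical $\ell_2$ distance on $\mathcal{F}|_S$ so that the diameter $\sqrt{m}$ (and hence the integration limits) is consistent with the factor $1/m$ appearing in $R_m$. Since both points are handled in the cited reference, I would present the chaining argument and defer the constant-level accounting to \cite{bartlett2017spectrallynormalizedmarginbounds}.
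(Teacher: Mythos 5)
The paper does not prove this lemma—it imports it verbatim from Bartlett, Foster, and Telgarsky (Lemma A.5 there)—so there is no in-paper argument to compare against. Your proposal is the standard dyadic chaining proof that the cited reference itself uses (single-point cover at the diameter scale $\sqrt{m}$, telescoping increments bounded by $3\epsilon_j$, Massart's maximal inequality over at most $|V_j|^2$ increments, conversion of the geometric sum to the entropy integral, and absorption of the residual into $4\alpha/\sqrt{m}$), and it is correct in structure, with the constant-level bookkeeping appropriately deferred to the reference.
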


To bound the generalization gap for function class $\mathcal{F}$, a standard result can be obtained by probably approximately correct (PAC) learning theory \cite{mohri2018foundationsofMachineLearningML}. For any $\delta \in (0,1)$, with probability at least $1-\delta$, the generalization gap can be bounded as follows:
\begin{equation}
    \mathcal{R}_{\text{BER}}(f) - \hat{\mathcal{R}}_{\text{BER}}(f) \leq 2R_{m}(\tilde{\mathcal{F}})  +\sqrt{\frac{\log (1 / \delta)}{2 m}},
\end{equation}
where $\tilde{\mathcal{F}}$ is definde by the function class $\mathcal{F}$ and the loss function $l_{\text{BER}}$ as  $\tilde{\mathcal{F}} = \{(\mathbf{x}, \mathbf{y})\mapsto l(f(\mathbf{x}), \mathbf{y}) : g \in \mathcal{F} \}$.

\section{Main Results}
\subsection{The Pipeline of ECCT}

\begin{figure*}[!t]
	\centering
	\subfloat[The preprocessing for the received signal $\mathbf{y}$.]{\includegraphics[width=0.5\textwidth]{./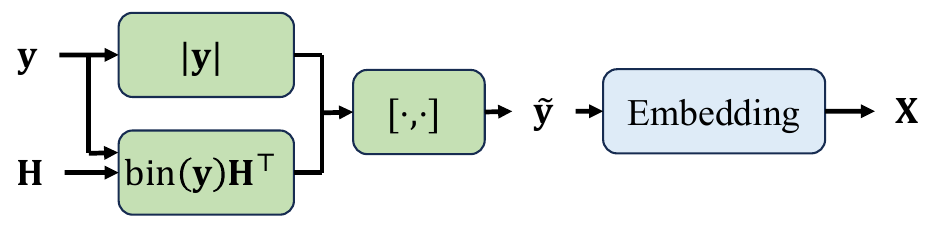}
		\label{fig: ecct preprocessing}}
	\hfill
    
	\subfloat[An attention layer]{\includegraphics[width=0.75\textwidth]{./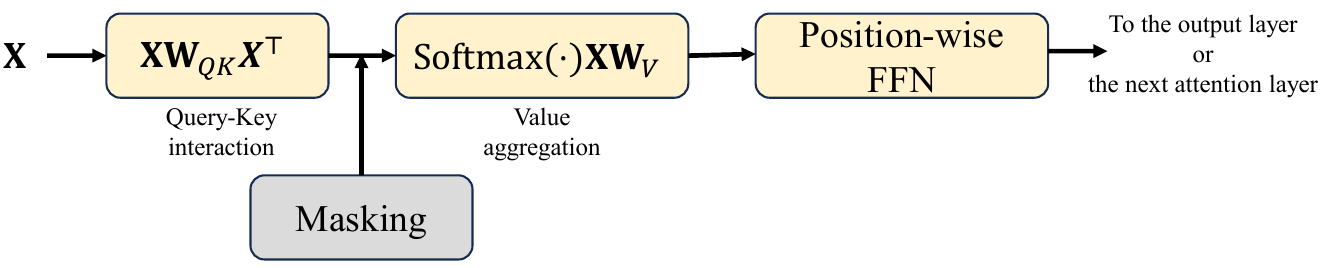}
		\label{fig: ecct attention layer}}
    \hfill

    \subfloat[The output layer]{\includegraphics[width=0.75\textwidth]{./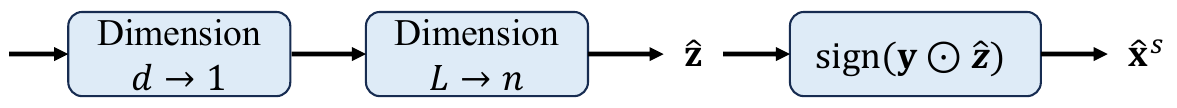}
		\label{fig: ecct output layer}}
	
	\caption{The pipeline of the ECCT decoder.}
	\label{fig: ecct pipeline}
\end{figure*}

To analyze ECCT, it is essential to formalize the flow of data through the model, upon which all subsequent analysis is built. Following established practices in the analysis of Transformer-based models \cite{edelman2022inductivebiasTransformerGeneralization,trauger2024sequence}, we begin with the single-layer, single-head attention setting, which can then be extended to more general configurations, such as multi-layer and multi-head attention. Note that in this context, a “layer” typically refers to a single attention layer in the Transformer architecture. The overall architecture of ECCT, depicted in Fig. \ref{fig: ecct pipeline}, comprises a preprocessing stage for the received signal $\mathbf{y}$, an attention layer, and an output layer. For the basic ECCT (Definition \ref{def: ECCT single layer}), we omit the “Masking” component shown in Fig. \ref{fig: ecct pipeline}, which will be introduced later in Definition \ref{def: masked attention of ECCT}.

During decoding, the input is first preprocessed by computing the magnitude of the received vector $\mathbf{y}$ and its syndrome, which are concatenated to form $\tilde{\mathbf{y}} \triangleq [|\mathbf{y}|, \operatorname{bin}(\mathbf{y})\mathbf{H}^{\top}]$, where $\operatorname{bin}(\cdot)$ denotes the hard-decision operation. Let the length of $\tilde{\mathbf{y}}$ be $L$, which in general equals the sum of the code length $n$ and the number of rows $r$ of $\mathbf{H}$. In typical settings, this reduces to $L=n+(n-k)=2n-k$. Each element is then embedded into a $d$-dimensional vector via $\mathbf{X} \triangleq \tilde{\mathbf{y}} \odot \mathbf{W}_{emb} $, where $\mathbf{W}_{emb}$ is a learnable real-valued matrix of size $L\times d$, and $\odot$ denotes elementwise multiplication (implicitly leveraging PyTorch’s broadcasting mechanism, which replicates $\tilde{\mathbf{y}}$ $d-1$ times to form an $L\times d$ matrix).

\begin{definition}[Basic (Single-Layer) ECCT]\label{def: ECCT single layer}
A single-layer ECCT decoder $f \in \mathcal{F}_{ECCT}$ consists of an attention layer followed by an output layer: 
\begin{enumerate}
    \item Attention layer: Following established practices in the analysis of Transformer \cite{edelman2022inductivebiasTransformerGeneralization}, we describe only the key operations, which include the self-attention mechanism and the feed-forward network (FFN) with a single hidden layer:
    \begin{equation}
        \mathbf{X}_{SA}=\left(\sigma\left[\left(\operatorname{Softmax}\left(\mathbf{X} \mathbf{W}_{Q} \mathbf{W}_{K}^{\top} \mathbf{X}^{\top}\right) \mathbf{X} \mathbf{W}_{V}\right) \mathbf{W}_{F1}\right]\right) \mathbf{W}_{F2}. \label{eq: self-attn, original, single layer}
    \end{equation}
    Let $\mathbf{W}_{QK}= \mathbf{W}_{Q} \mathbf{W}_{K}^{\top} \in \mathbb{R}^{d\times d}$, and let $\mathbf{W}_{V} \in \mathbb{R}^{d\times d}, \mathbf{W}_{F1} \in \mathbb{R}^{d\times ud}, \mathbf{W}_{F2} \in \mathbb{R}^{ud\times d}$, where $u$ denotes the scaling factor of the hidden layer in the FFN. The activation function $\sigma(\cdot)$ is assumed to be $L_{\sigma}$-Lipschitz, and the $\operatorname{Softmax}(\cdot)$ operator is $L_{sm}$-Lipschitz.

    \item Output layer: The role of the output layer is to align the data dimension with the required output size. Specifically, it performs (a) a reduction of each embedded vector from $d$ dimensions to 1, and (b) a shortening of the sequence length from $L$ to $n$. Formally, this can be expressed as: 
    \begin{equation}
        \begin{aligned}
            \hat{\mathbf{z}} &= (\mathbf{X}_{SA}\mathbf{W}_{o1})^{\top}\mathbf{W}_{o2} \\
            &=\mathbf{W}_{o1}^{\top}\mathbf{X}_{SA}^{\top}\mathbf{W}_{o2},
        \end{aligned}
    \end{equation}
    where $\mathbf{W}_{o1}\in \mathbb{R}^{d\times 1}, \mathbf{W}_{o2}\in \mathbb{R}^{L\times n}$. For the output $\hat{z}$, its $j$-th element can be written as: 
    \begin{equation}
        \hat{\mathbf{z}}[j]=\mathbf{W}_{o1}^{\top}\mathbf{X}_{SA}^{\top}\mathbf{W}_{o2}[:,j],
    \end{equation}
    where $\mathbf{W}_{o2}[:,j]$ denotes the $j$-th column of $\mathbf{W}_{o2}$. During the decision stage, the entries of $\hat{\mathbf{z}}$ are normalized to the interval $[0,1]$ through the sigmoid function.
\end{enumerate}
\end{definition}

It is important to note that the output of ECCT, $\hat{\mathbf{z}}=f(\mathbf{X})$, estimates the \emph{multiplicative} noise applied to $\mathbf{x}^{s}$, the codeword $\mathbf{x}$ modulated by binary phase shift keying (BPSK), rather than directly estimating the \emph{additive} noise in the AWGN channel\cite{bennatan2018syndromebasedapproach,choukroun2022ECCT}. These two formulations are theoretically equivalent in Binary-Input Symmetric-Output (BISO) channels\cite{richardson2002LDPCcapacityundermessagepassingdecoding}. It can be proved that a decoder that takes $\tilde{\mathbf{y}}$ as input and estimates the multiplicative noise can, in the ideal case, achieve maximum a posteriori (MAP) decoding. Moreover, this formulation decouples the decoder from the specific transmitted codeword, mitigating the risk of severe overfitting during training and enabling the use of only noisy all-zero codewords as the training dataset. Further details can be found in \cite{bennatan2018syndromebasedapproach}.

\subsection{Generalization Bound via Rademacher Complexity}
The overall decoder class $\mathcal{G}$ produces the recovered codeword, given by $g(\mathbf{y})=\hat{\mathbf{x}}^{s}=\operatorname{sign}(\mathbf{y} \odot f(\mathbf{X}))$ for a $g \in \mathcal{G}$, where $\operatorname{sign}(\cdot)$ applies element-wise binarization.The recovered codeword $\hat{\mathbf{x}}^{s}\in \{+1, -1\}^{n}$ and the superscript ``s" indicates that each entry takes values in $\{+1,-1\}$, which are in one-to-one correspondence with the underlying bit sequence. Since ECCT does not directly estimate the codeword itself, we establish the connection between noise estimation and codeword recovery through the following lemma. Hereafter, the term ``decoder” is used interchangeably to denote the ECCT and the overall decoder.

\begin{lemma}\label{lemma: equivalence of noise vs codeword BER}
    Under the loss function defined by the \emph{codeword} bit error rate (BER) of the entire decoder $g \in \mathcal{G}$, the bit error probability of the \emph{noise} estimated by the ECCT is equivalent to the decoder's BER.
\end{lemma}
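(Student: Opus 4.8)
The plan is to exploit the multiplicative structure of the noise model so that the codeword-bit-error event and the noise-estimation-error event coincide coordinate by coordinate, after which equality of the two error probabilities is immediate. First I would make the noise model explicit: in a BISO channel with BPSK, the received vector factors as $\mathbf{y} = \mathbf{x}^{s} \odot \tilde{\mathbf{n}}$, where $\tilde{\mathbf{n}}$ is the multiplicative noise whose value the ECCT estimates through $\hat{\mathbf{z}} = f(\mathbf{X})$. Since $\mathbf{x}^{s} \in \{+1,-1\}^{n}$ obeys $\mathbf{x}^{s} \odot \mathbf{x}^{s} = \mathbf{1}$, this is equivalent to $\tilde{\mathbf{n}} = \mathbf{y} \odot \mathbf{x}^{s}$, so the ``bit'' carried by the noise at coordinate $j$ is its sign $\operatorname{sign}(\tilde{\mathbf{n}}[j])$, which records whether the channel flipped that coordinate; a noise-estimation error at $j$ is then the event $\{\operatorname{sign}(\hat{\mathbf{z}}[j]) \neq \operatorname{sign}(\tilde{\mathbf{n}}[j])\}$.

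Second, I would factor the sign function across products. Applying $\operatorname{sign}(ab) = \operatorname{sign}(a)\operatorname{sign}(b)$ for nonzero reals to $\tilde{\mathbf{n}}[j] = \mathbf{y}[j]\,\mathbf{x}^{s}[j]$, the true BPSK symbol can be rewritten as $\mathbf{x}^{s}[j] = \operatorname{sign}(\mathbf{y}[j])\,\operatorname{sign}(\tilde{\mathbf{n}}[j])$, while the decoder output is by definition $\hat{\mathbf{x}}^{s}[j] = \operatorname{sign}(\mathbf{y}[j]\,\hat{\mathbf{z}}[j]) = \operatorname{sign}(\mathbf{y}[j])\,\operatorname{sign}(\hat{\mathbf{z}}[j])$.

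Third, I would compare the two events bit by bit. Substituting the two displayed expressions and cancelling the common factor $\operatorname{sign}(\mathbf{y}[j]) \in \{+1,-1\}$, the codeword-bit-error event $\{\hat{\mathbf{x}}^{s}[j] \neq \mathbf{x}^{s}[j]\}$ is seen to be identical to the noise-estimation-error event $\{\operatorname{sign}(\hat{\mathbf{z}}[j]) \neq \operatorname{sign}(\tilde{\mathbf{n}}[j])\}$. Hence the corresponding indicators $\mathbb{I}(\cdot)$ agree for every $j$ and every realization, so the per-sample values of $l_{\text{BER}}$ coincide; averaging over the $n$ coordinates and then taking the expectation over $\mathcal{D}$ yields the claimed equality between the noise bit-error probability and $\mathcal{R}_{\text{BER}}(g)$, and the same argument on the training sample gives the matching identity for $\hat{\mathcal{R}}_{\text{BER}}(g)$.

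I expect the only delicate points to be the bookkeeping of the sign factorization—namely justifying the cancellation of $\operatorname{sign}(\mathbf{y}[j])$ and discarding the degenerate set $\{\mathbf{y}[j]=0\}$, which is null under any continuous channel such as AWGN—together with invoking the one-to-one correspondence between the $\{+1,-1\}$ symbols and the underlying $\{0,1\}$ bits, so that the BER defined on bits agrees with the error probability defined on symbols. Everything else is direct substitution, and crucially no property of $f$ beyond the recovery rule $\hat{\mathbf{x}}^{s} = \operatorname{sign}(\mathbf{y}\odot f(\mathbf{X}))$ is used, which is exactly what makes the two loss formulations interchangeable throughout the subsequent generalization analysis.
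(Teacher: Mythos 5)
Your proposal is correct and takes essentially the same approach as the paper's proof: both exploit the multiplicative factorization $\mathbf{y}=\mathbf{x}^{s}\odot\mathbf{z}$ and the multiplicativity of $\operatorname{sign}(\cdot)$ to show that the codeword-bit-error event and the noise-estimation-error event coincide coordinate by coordinate, so the per-sample losses (and hence both the empirical and true risks) are identical. The only cosmetic difference is that the paper cancels the unit factor $\mathbf{x}^{s}[j]$ while you cancel $\operatorname{sign}(\mathbf{y}[j])$, and your explicit handling of the null set $\{\mathbf{y}[j]=0\}$ and the bit/symbol correspondence makes precise two points the paper leaves implicit.
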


\begin{proof}[Proof of Lemma~\ref{lemma: equivalence of noise vs codeword BER}]
For a given codeword $\mathbf{x}$, let the decoder’s estimate be $\hat{\mathbf{x}}$, and let their respective binarized forms be $\mathbf{x}^{s}$ and $\hat{\mathbf{x}}^{s}$. The decoder output satisfies $g(\mathbf{y})=\mathbf{x}^{s}=\operatorname{sign}(\mathbf{y}\odot f(\mathbf{X}))=\operatorname{sign}(\mathbf{x}^{s}\odot\mathbf{z}\odot\hat{\mathbf{z}})=\mathbf{x}^{s}\odot\mathbf{z}^{s}\odot\hat{\mathbf{z}}^{s}$. 
The BER loss associated with this codeword is given by: 
    \begin{equation}
    \begin{aligned}
        l_{\text{BER}}(g(\mathbf{y}),\mathbf{x}^{s}) &= \frac{\sum_{j=1}^{n}\mathbb{I}(\hat{\mathbf{x}}[j]\neq \mathbf{x}[j])}{n} \\
        &= \frac{\sum_{j=1}^{n}\mathbb{I}(\hat{\mathbf{x}}^{s}[j]\neq \mathbf{x}^{s}[j])}{n} \\
        &= \frac{\sum_{j=1}^{n}\mathbb{I}((\mathbf{x}^{s}[j]\cdot\mathbf{z}^{s}[j]\cdot\hat{\mathbf{z}}^{s}[j])\neq \mathbf{x}^{s}[j])}{n} \\
        &= \frac{\sum_{j=1}^{n}\mathbb{I}(\hat{\mathbf{z}}^{s}[j]\neq \mathbf{z}^{s}[j])}{n} \\
        &= \frac{\sum_{j=1}^{n}\mathbb{I}(\hat{\mathbf{z}}[j]\neq \mathbf{z}[j])}{n} \\
        &= l_{\text{BER}}(f(\mathbf{X}), \mathbf{z})
    \end{aligned}
    \end{equation}
\end{proof}

Lemma \ref{lemma: equivalence of noise vs codeword BER} shows that the binary classification task on each bit of the codeword is equivalent to the binary classification task on the estimated noise, where the labels correspond to the hard-decision multiplicative noise $\mathbf{z}$. From the perspective of hard-decision decoding, ECCT essentially estimates the error pattern of the codeword \cite{ryan2009channel}, whose per-bit error probabilities coincide with the BER of the codeword. This provides an alternative interpretation of the lemma. Based on this lemma, we naturally obtain the empirical Rademacher complexity of the $j$-th output of ECCT:
\begin{equation}
        R_{m}\left(\mathcal{F}_{\text{ECCT}}[j]\right) \triangleq \underset{\sigma}{\mathbb{E}}\left[\sup _{f \in \mathcal{F}_{\text{ECCT}}} \frac{1}{m} \sum_{i=1}^{m} \sigma_{i} \cdot f\left(\mathbf{y}_{i}\right)[j]\right].
        \label{def: noise bit-wise Rad complexity}
\end{equation}

As a direct consequence of Lemma \ref{lemma: equivalence of noise vs codeword BER} and the result in \cite[Prop.~1]{adiga2024generalizationboundforNBP}, we obtain the following result for ECCT:

\begin{corollary}\label{prop: ECCT generalization bound vs bit-wise Rad complexity}
    For any $ \delta \in (0, 1)$, with probability at least $1-\delta$, the generalization gap for any ECCT decoder $f \in \mathcal{F}_{\text{ECCT}}$ can be upper bounded as follows:
    \begin{equation}
        \mathcal{R}_{\mathrm{BER}}(f)-\hat{\mathcal{R}}_{\mathrm{BER}}(f) \leq \frac{1}{n} \sum_{j=1}^{n} R_{m}\left(\mathcal{F}_{\text{ECCT}}[j]\right)+\sqrt{\frac{\log (1 / \delta)}{2 m}},
        \label{eq: generalization bound via Rademacher(ECCT)}
    \end{equation}
    where $R_{m}\left(\mathcal{F}_{\text{ECCT}}[j]\right)$ denotes the bit-wise Rademacher complexity for the $j$-th output bit, and $\mathcal{F}_{\text{ECCT}}$ is the funtction class of ECCT decoders.
\end{corollary}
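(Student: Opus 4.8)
The plan is to reduce the claim to the generic PAC--Rademacher bound already recalled in the preliminaries, transported from the codeword domain to the noise domain via Lemma~\ref{lemma: equivalence of noise vs codeword BER}. First I would invoke that lemma, which establishes a \emph{pointwise} identity $l_{\text{BER}}(g(\mathbf{y}),\mathbf{x}^{s}) = l_{\text{BER}}(f(\mathbf{X}),\mathbf{z})$ valid for every sample. Because this identity holds sample by sample, both the empirical risk $\hat{\mathcal{R}}_{\text{BER}}(f)$ and the true risk $\mathcal{R}_{\text{BER}}(f)$ are numerically unchanged when the loss is evaluated on the noise-estimation task, and hence so is the generalization gap. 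It therefore suffices to bound the gap for the multi-bit binary-classification problem of estimating the hard-decision noise $\mathbf{z}^{s}$ from $\mathbf{y}$, which is exactly the setting treated in \cite[Prop.~1]{adiga2024generalizationboundforNBP}.

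Next I would exploit the additive structure of the BER loss, writing $l_{\text{BER}}(f(\mathbf{X}),\mathbf{z}) = \frac{1}{n}\sum_{j=1}^{n}\mathbb{I}(\hat{\mathbf{z}}[j]\neq \mathbf{z}[j])$, and apply the standard bound $\mathcal{R}_{\text{BER}}(f)-\hat{\mathcal{R}}_{\text{BER}}(f)\leq 2R_{m}(\tilde{\mathcal{F}})+\sqrt{\log(1/\delta)/(2m)}$. The concentration term is immediate from McDiarmid's inequality, since the BER loss takes values in $[0,1]$ and altering a single training pair changes the empirical risk by at most $1/m$. For the complexity term I would use sub-additivity of the supremum to pull the average over bit positions outside the Rademacher expectation, bounding $R_{m}(\tilde{\mathcal{F}})$ by $\frac{1}{n}\sum_{j=1}^{n}$ of the per-bit loss-class complexities.

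The final step rewrites each per-bit complexity in terms of the bit-wise output complexity of \eqref{def: noise bit-wise Rad complexity}. Using the binary-classification identity $\mathbb{I}(\hat{\mathbf{z}}^{s}[j]\neq \mathbf{z}^{s}[j]) = \tfrac{1}{2}\bigl(1-\hat{\mathbf{z}}^{s}[j]\,\mathbf{z}^{s}[j]\bigr)$, the constant term vanishes under the zero-mean Rademacher variables, and since $\mathbf{z}^{s}[j]\in\{+1,-1\}$ is a fixed label the variable $\sigma_{i}\mathbf{z}^{s}[j]$ has the same distribution as $\sigma_{i}$; this absorption yields a factor $\tfrac{1}{2}$ that cancels the leading $2$ and leaves exactly $\frac{1}{n}\sum_{j=1}^{n}R_{m}(\mathcal{F}_{\text{ECCT}}[j])$. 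Assembling the three steps gives the stated inequality, so that Lemma~\ref{lemma: equivalence of noise vs codeword BER} is the only ingredient needed beyond the cited proposition.

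The step I expect to demand the most care is the passage from the non-Lipschitz $0$--$1$ indicator to the real-valued decision statistic appearing in \eqref{def: noise bit-wise Rad complexity}: Talagrand's contraction lemma does not apply to the indicator directly, so the reduction must be carried out algebraically through the binary identity and the sign/label symmetry above rather than via a Lipschitz surrogate. I would also be careful that the bit-wise complexity is interpreted consistently on the decision statistic $f(\mathbf{y})[j]$ that enters the sign decision, so that the cancellation of the factor $2$ is exact and the final constant matches the corollary.
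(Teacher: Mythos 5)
Your proposal is correct and follows essentially the same route as the paper, which derives the corollary directly from Lemma~\ref{lemma: equivalence of noise vs codeword BER} together with the cited result \cite[Prop.~1]{adiga2024generalizationboundforNBP}; your write-up simply unpacks the content of that cited proposition (sub-additivity over bit positions, the identity $\mathbb{I}(\hat{\mathbf{z}}^{s}[j]\neq \mathbf{z}^{s}[j])=\tfrac{1}{2}(1-\hat{\mathbf{z}}^{s}[j]\mathbf{z}^{s}[j])$, and label-symmetry absorbing the factor $2$), which the paper leaves implicit. The one caveat you already flag yourself --- that the algebraic cancellation naturally produces the complexity of the sign-composed (or sigmoid-composed) decision statistic rather than of the raw output $f(\mathbf{y})[j]$ appearing in \eqref{def: noise bit-wise Rad complexity} --- is a gap the paper shares, so your reconstruction is, if anything, more explicit than the original.
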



We now present the main result, which provides a generalization bound expressed in terms of the key parameters of the model and code based on Rademacher complexity.

\begin{assumption}\label{assumption: bounded input and weight}
    We make the following assumptions:
    \begin{itemize}
        \item Input: For each row of $\mathbf{X} \in \mathbb{R}^{L \times d}$ (i.e. $\mathbf{x}_{l}, l=1,\ldots,L$, denoting each positional embedding vector), it is bounded by $\left\|\mathbf{x}_{l}\right\|_{2}\leq b_{x}$.
        \item Weight: The weight matrices, $\mathbf{W}_{QK}, \mathbf{W}_{V}, \mathbf{W}_{F1}, \mathbf{W}_{F2}, \mathbf{W}_{emb}$, are bounded by their corresponding spectral norm bounds, $B_{QK}, B_{V}, B_{F1}, B_{F2}, B_{emb}$, respectively. The vectors $\mathbf{W}_{o1}$ and $\mathbf{W}_{o2}[:,j]$ are bounded by $L_2$ norm bounds $b_{o1}$ and $ b_{o2}$, respectively. Each entry of the weight matrices is assumed to be bounded in magnitude by $w$.
    \end{itemize}
\end{assumption}

\begin{theorem}\label{theorem: bound for single-layer ecct (main result)}
    For any ECCT $f \in \mathcal{F}_{ECCT}$ and any $\delta \in (0,1)$, with probability at least $1-\delta$, the generalization gap can be bounded as follows:
    \begin{equation}
        \mathcal{R}_{\mathrm{BER}}(f)-\hat{\mathcal{R}}_{\mathrm{BER}}(f) \leq \frac{4}{\sqrt{m}}+\sqrt{\frac{\log (1 / \delta)}{2 m}}+12 \sqrt{\frac{(L+(2 u+2) d^{2}) \log (18 \sqrt{m d}B L^{2})}{m}},
    \end{equation}
    where $L=n+r$ denotes the length of the input sequence, $d$ denotes the embedding dimension, $u$ denotes the scaling factor of the hidden layer in the FFN, $m$ denotes the training dataset size, and $B=b_{o1}^{2} L_{\sigma} B_{F1}^{2} B_{F2}^{2} B_{V}^{2}B_{QK} b_{x}^{3}w^{2}$.
\end{theorem}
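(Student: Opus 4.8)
The plan is to invoke Corollary~\ref{prop: ECCT generalization bound vs bit-wise Rad complexity}, which already supplies the additive term $\sqrt{\log(1/\delta)/2m}$ and reduces the task to controlling the bit-wise Rademacher complexity $R_{m}(\mathcal{F}_{\text{ECCT}}[j])$ of the scalar map $\mathbf{X}\mapsto\hat{\mathbf{z}}[j]$ from \eqref{def: noise bit-wise Rad complexity}. Since the bound I aim to prove is independent of $j$, it suffices to bound a single $R_{m}(\mathcal{F}_{\text{ECCT}}[j])$ uniformly and then average over $j=1,\ldots,n$ in Corollary~\ref{prop: ECCT generalization bound vs bit-wise Rad complexity}. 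To turn the Rademacher complexity into something computable, I would apply the Dudley entropy integral \eqref{eq: Rademacher by Dudley inter bound in PRELIMINARIES}, which leaves me with the problem of estimating the covering number $\mathcal{N}(\mathcal{F}_{\text{ECCT}}[j],\epsilon,\|\cdot\|_{2})$ of the output class with respect to the (unnormalized) empirical $\ell_{2}$ norm over the $m$ samples.

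The core of the argument is this covering estimate, carried out under Assumption~\ref{assumption: bounded input and weight}. The strategy is to assemble an $\epsilon$-net of $\mathcal{F}_{\text{ECCT}}[j]$ from separate $\eta$-nets of the learnable matrices $\mathbf{W}_{QK},\mathbf{W}_{V},\mathbf{W}_{F1},\mathbf{W}_{F2}$ and the output vector $\mathbf{W}_{o2}[:,j]$, while $\mathbf{W}_{emb}$ and $\mathbf{W}_{o1}$ enter only through their fixed norm budgets $b_{x}$ and $b_{o1}$ (consistent with the absence of their parameter counts in $W$). For this I first establish that $\hat{\mathbf{z}}[j]$ is Lipschitz in each covered matrix by propagating a single-matrix perturbation forward through the pipeline---the bilinear logits $\mathbf{X}\mathbf{W}_{QK}\mathbf{X}^{\top}$, the $L_{sm}$-Lipschitz softmax, the value map $\mathbf{X}\mathbf{W}_{V}$, the $L_{\sigma}$-Lipschitz FFN $\sigma[\,\cdot\,\mathbf{W}_{F1}]\mathbf{W}_{F2}$, and the linear output contraction by $\mathbf{W}_{o1}$ and $\mathbf{W}_{o2}[:,j]$---collecting the operator norms of every matrix it traverses. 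Because $\mathbf{X}$ appears twice inside the attention logits and once more in the value branch, the perturbation of $\mathbf{W}_{QK}$ alone already scales like $b_{x}^{3}$, and the uniform bound on $\text{radius}\times(\text{Lipschitz constant})$ over all covered matrices assembles into the aggregate $B=b_{o1}^{2}L_{\sigma}B_{F1}^{2}B_{F2}^{2}B_{V}^{2}B_{QK}b_{x}^{3}w^{2}$, where the squared factors reflect matrices that contribute both as a perturbation radius and inside the Lipschitz constants of other matrices. Converting to the unnormalized empirical $\ell_{2}$ norm over the $m$ inputs introduces the factor $\sqrt{m}$, the sequence length $L$ enters (from the softmax over $L$ keys and the summation over $L$ positions in $\mathbf{W}_{o2}$, giving $L^{2}$) together with a dimension factor $\sqrt{d}$ from norm conversions, and a volumetric bound $\log\mathcal{N}\lesssim(\#\text{entries})\log(\text{radius}/\eta)$ for each matrix, summed and with $\epsilon$ allocated across the matrices, yields $\log\mathcal{N}(\mathcal{F}_{\text{ECCT}}[j],\epsilon,\|\cdot\|_{2})\le W\log(18\sqrt{md}\,B\,L^{2}/\epsilon)$ with $W=L+(2u+2)d^{2}$.

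With this covering bound, the final step is to evaluate \eqref{eq: Rademacher by Dudley inter bound in PRELIMINARIES}. Taking $\alpha=1$, the leading term gives $4/\sqrt{m}$, and since $\log(18\sqrt{md}BL^{2}/\epsilon)$ is decreasing in $\epsilon$ I bound it on $[1,\sqrt{m}]$ by its value at $\epsilon=1$, so that $\frac{12}{m}\int_{1}^{\sqrt{m}}\sqrt{W\log(18\sqrt{md}BL^{2}/\epsilon)}\,d\epsilon\le 12\sqrt{W\log(18\sqrt{md}BL^{2})/m}$; substituting $W=L+(2u+2)d^{2}$ and averaging this uniform bound over $j$ in Corollary~\ref{prop: ECCT generalization bound vs bit-wise Rad complexity} produces the claimed inequality. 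I expect the main obstacle to be the first part of the covering step: tracking how a perturbation of each weight matrix propagates through the \emph{composition} of softmax attention and the nonlinear FFN, since the multiplicative coupling between $\mathbf{W}_{QK}$ inside the softmax and the value/feed-forward path is precisely what forces the squared norm factors in $B$, and pinning down the exact powers of $b_{x}$, $w$, and $L$ requires carefully separating the contribution of the attention weights from those of the value and feed-forward branches.
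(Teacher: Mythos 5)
Your proposal follows essentially the same route as the paper's proof: reduce to the bit-wise Rademacher complexity via Corollary~\ref{prop: ECCT generalization bound vs bit-wise Rad complexity}, bound it through Dudley's entropy integral, control the covering number of $\mathcal{F}_{ECCT}[j]$ by a product of per-weight-matrix coverings whose radii are set by the Lipschitz constants of the output with respect to each individual matrix, and assemble the same aggregate constant $B$ and parameter count $L+(2u+2)d^{2}$. The only deviations are bookkeeping ones --- the paper also covers $\mathbf{W}_{o1}$ (contributing a $d$ that is dropped from the final exponent) and takes $\alpha=1/\sqrt{m}$, recovering the $\sqrt{m}$ inside the logarithm from the lower integration limit rather than from your empirical-$\ell_{2}$ norm conversion --- and neither changes the argument or the final bound.
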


\begin{proof}[Proof-Sketch of Theorem~\ref{theorem: bound for single-layer ecct (main result)}]
    The main idea of the proof is that the Rademacher complexity can be upper-bounded via Dudley’s entropy integral, which in turn depends on the covering number. The covering number of ECCT characterizes the minimal cardinality of a subset of $\mathcal{F}_{ECCT}$ required to approximate the decoding function to a prescribed accuracy. Its covering construction decomposes into a Cartesian product of coverings for the individual weight matrices of the decoder \cite{chen2020generalizationofafamilyofRNN,edelman2022inductivebiasTransformerGeneralization}. To derive upper bounds on the covering numbers of these matrices, we first show that ECCT is Lipschitz continuous with respect to each weight matrix. In particular, a small perturbation to any given weight matrix induces only a controlled change in the output of the function class, with the change governed by the corresponding Lipschitz constant. We then use the differentiability of multivariate continuous functions \cite[Theorem 3.1.6]{federer2014geometricmeasure} and the fact that Lipschitz constants can be bounded via gradients to obtain explicit upper bounds on these constants \cite{virmaux2018lipschitzConstantEstimation}. Finally, these bounds allow us to sequentially establish the covering numbers of the individual matrices, the covering number of ECCT, the bit-wise Rademacher complexity, and ultimately the generalization bound.
    
        
\end{proof}

Next, we show that for ECCT with a parity-check–matrix–based masking operation, the result of Theorem \ref{theorem: bound for single-layer ecct (main result)} continues to hold asymptotically. Specifically, we first define the masked attention mechanism and then present the generalization bound for ECCT under the masking operation.

\begin{algorithm}[t]
    \caption{Mask Matrix Construction, i.e. $\operatorname{mask}(\mathbf{H})$}
    \label{algo: mask matrix construction} 
    \begin{algorithmic}[1]
        \REQUIRE Parity-check matrix $\mathbf{H}$ of size $r \times n$
        \ENSURE Mask matrix $\mathbf{M}$ of size $L \times L$
        \STATE $\mathbf{M} \gets \mathbf{I}_{L}$ \COMMENT{Initialization}
        \FOR {$i \gets 1 \  \TO \  r$}
            \FOR {$j \gets 1 \  \TO \  n$}
                \IF{$\mathbf{H}[i,j]=1$}
                    \STATE $\mathbf{M}[i+n,j], \mathbf{M}[j,i+n] \gets 1, 1$
                \ENDIF
                \FOR {$k \gets 1 \  \TO \  n$}
                    \IF{$\mathbf{H}[i,j]=1$ and $\mathbf{H}[i,k]=1$}
                        \STATE $\mathbf{M}[j,k], \mathbf{M}[k,j] \gets 1, 1$
                    \ENDIF
                \ENDFOR
            \ENDFOR
        \ENDFOR
        \STATE $\mathbf{M}\gets \neg \mathbf{M}$ \COMMENT{Showing the entries to be masked}
        \FOR {$i \gets 1 \  \TO \  L$}
            \FOR {$j \gets 1 \  \TO \  L$}
                \IF{$\mathbf{M}[i,j]=1$}
                    \STATE $\mathbf{M}[i,j]\gets -\infty$
                \ENDIF
            \ENDFOR
        \ENDFOR
    \end{algorithmic} 
\end{algorithm}

\begin{figure}[!t]
    \centering
    \includegraphics[width=0.7\textwidth]{./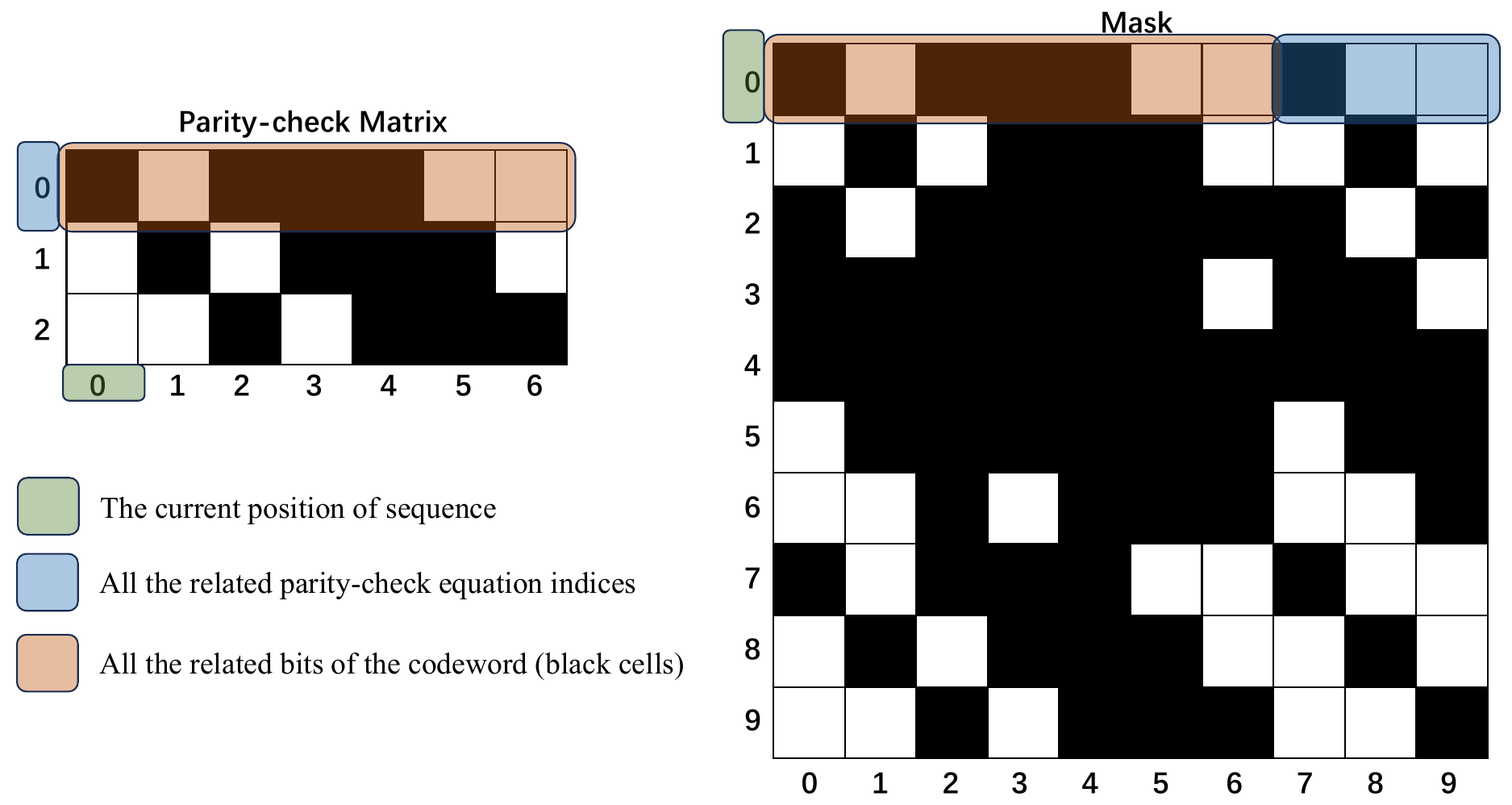}
    \caption{The mask of the $(7,4)$ Hamming code. For the length $L=7+3=10$ sequence, the first 7 positions represent codeword bits and the last 3 represent parity-check equations. The green box marks index 0 of the sequence, which we focus on; the blue shows its check-node connections, and the orange shows the associated bit-to-bit interactions induced by the parity-check constraints.}
    \label{fig: mask example Hamming (7,4)}
\end{figure}

\begin{definition}[Sparse Attention of ECCT]\label{def: masked attention of ECCT}
    In an attention layer, the masked attention operation can be defined as:
    \begin{equation}
        \mathbf{A}^{mask}\triangleq\operatorname{Softmax}\left(\mathbf{X} \mathbf{W}_{QK} \mathbf{X}^{\top}+\operatorname{mask}(\mathbf{H})\right),
        \label{eq: masked attention}
    \end{equation}
    where $\operatorname{mask}(\mathbf{H})$ denotes the $L \times L$ mask constructed from the parity-check matrix $\mathbf{H}$. The construction of this mask is given by Algorithm \ref{algo: mask matrix construction}. We further define the resulting matrix $\mathbf{A}^{mask}$ as $P$-sparse attention. For any row index $i$, it satisfies:
    \begin{equation}
        |\{j|\mathbf{A}^{mask}[i,j]\neq 0\}|\leq P.
    \end{equation}
    Note that $\mathbf{A}^{mask}$ is also a symmetric matrix.
\end{definition}

After query-key interaction $\mathbf{X} \mathbf{W}_{QK} \mathbf{X}^{T}$, the resulting $L \times L$ matrix represents the attention scores between all positions in the sequence. A higher score corresponds to a higher level of ``attention", and thus a more influential role in subsequent computations. Moreover, after applying the Softmax operation to the $i$-th row, the resulting distribution can be interpreted as the proportion of attention that position $i$ allocates to every position in the sequence. Fig. \ref{fig: mask example Hamming (7,4)} shows, for the $(7,4)$ Hamming code, the correspondence between the parity-check matrix and the attention mask, where black (white) cells denote preserved (masked) positions.

Overall, the masked attention mechanism forces the attention scores between positions that are not structurally coupled by any parity-check constraint to become zero after the row-wise Softmax operation. This compels the model to focus on positions that do satisfy parity-check constraints, which empirically leads to substantial performance improvements compared with ECCT without masking \cite{choukroun2022ECCT}. However, we emphasize that ECCT with sparse attention and the basic ECCT exhibit the same asymptotic order in their generalization bounds, as established in Theorem 2.

\begin{theorem}[Impact of Sparse Attention]\label{theorem: impact of mask}
    For any ECCT $f \in \mathcal{F}_{ECCT}$ equipped with $P$-sparse attention and any $\delta \in (0,1)$, with probability at least $1-\delta$, the generalization gap can be bounded as follows:
    \begin{equation}
        \mathcal{R}_{\mathrm{BER}}(f)-\hat{\mathcal{R}}_{\mathrm{BER}}(f) \leq \frac{4}{\sqrt{m}}+\sqrt{\frac{\log (1 / \delta)}{2 m}}+12 \sqrt{\frac{(L+(2 u+2) d^{2}) \log (18 \sqrt{m d}\Lambda^{sparse})}{m}},
    \end{equation}
    where $\Lambda^{sparse}$ represents the global Lipschitz bound of the sparse version of the ECCT decoder. Specifically, $\Lambda^{sparse}$ is dominated by the Lipschitz constant with respect to the query-key interaction term $L_{QK}$, which is defined as $\Lambda^{sparse}=BL^{1.5}\sqrt{P}$, where $B$ is the constant defined in Theorem \ref{theorem: bound for single-layer ecct (main result)}. Compared to the global Lipschitz bound of the basic (or dense) version $\Lambda^{dense}=BL^{2}$ in Theorem \ref{theorem: bound for single-layer ecct (main result)}, the sparsity constraint induces a complexity contraction factor:
    \begin{equation}
        \eta = \frac{\Lambda^{sparse}}{\Lambda^{dense}}\leq \sqrt{\frac{P}{L}}.
    \end{equation}
\end{theorem}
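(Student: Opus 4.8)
The plan is to reuse the entire derivation of Theorem~\ref{theorem: bound for single-layer ecct (main result)} and to change only the global Lipschitz constant that enters the covering-number bound and hence Dudley's integral. The masking term $\operatorname{mask}(\mathbf{H})$ is added solely inside the Softmax; it leaves the value path $\mathbf{X}\mathbf{W}_{V}$, the FFN matrices $\mathbf{W}_{F1},\mathbf{W}_{F2}$, the activation, the embedding, and the output layer untouched. Consequently the per-matrix Lipschitz constants with respect to every weight other than $\mathbf{W}_{QK}$ are exactly those of Theorem~\ref{theorem: bound for single-layer ecct (main result)}, and it suffices to (i) re-derive the Lipschitz constant $L_{QK}$ of the decoder output with respect to $\mathbf{W}_{QK}$ under $P$-sparse attention, (ii) confirm that $L_{QK}$ still dominates the global product, and (iii) feed the resulting $\Lambda^{sparse}$ through the covering-number/Dudley pipeline verbatim.

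First I would exploit that the mask is fixed and data-independent: for each row $i$ the active set $\{\,j:\mathbf{A}^{mask}[i,j]\neq 0\,\}$ is a fixed set of size at most $P$, and any masked logit equals $-\infty$, so its Softmax output and the corresponding rows and columns of the Softmax Jacobian vanish identically. Hence the row-$i$ Softmax behaves as a map on only its $\le P$ active logits. Tracing the proof of Theorem~\ref{theorem: bound for single-layer ecct (main result)}, the single quantity in the $\mathbf{W}_{QK}$ chain that is sensitive to the mask is the $\ell_{2}$ norm of the per-row logit perturbation. Writing $\boldsymbol{\ell}_i=(\mathbf{X}\mathbf{W}_{QK}\mathbf{X}^{\top})[i,:]$, a perturbation $\delta\mathbf{W}_{QK}$ changes entry $j$ by $\mathbf{x}_i^{\top}(\delta\mathbf{W}_{QK})\mathbf{x}_j$, of magnitude at most $b_x^{2}\,\|\delta\mathbf{W}_{QK}\|_{2}$. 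In the dense case this is aggregated over all $L$ columns, yielding $\|\delta\boldsymbol{\ell}_i\|_2\le\sqrt{L}\,b_x^{2}\|\delta\mathbf{W}_{QK}\|_{2}$; under masking the aggregation runs only over the $\le P$ active columns, yielding $\|\delta\boldsymbol{\ell}_i\|_2\le\sqrt{P}\,b_x^{2}\|\delta\mathbf{W}_{QK}\|_{2}$. This is the only place where $L$ is replaced by $P$.

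Next I would propagate this single change through the rest of the chain exactly as in Theorem~\ref{theorem: bound for single-layer ecct (main result)}: the $L_{sm}$-Lipschitz Softmax, the stacking of the $L$ rows into $\|\delta\mathbf{S}\|_{F}$, the value multiplication by $\mathbf{X}\mathbf{W}_{V}$ (bounded through the operator norm $\|\mathbf{X}\mathbf{W}_{V}\|_{2}$, which is insensitive to the mask), the FFN factors $B_{F1},B_{F2},L_{\sigma}$, and the output-layer factors $b_{o1},b_{o2}$. These steps reproduce the same constant $B$ and the same residual $L$-powers as in the dense proof, so replacing the lone $\sqrt{L}$ by $\sqrt{P}$ carries $L_{QK}^{dense}=BL^{2}$ into $L_{QK}^{sparse}=BL^{1.5}\sqrt{P}$. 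Since $P\le L$, this term still dominates the mask-independent contributions of all other weights, giving $\Lambda^{sparse}=BL^{1.5}\sqrt{P}$. Substituting $\Lambda^{sparse}$ for $BL^{2}$ in the covering-number estimate and re-running Dudley's entropy integral reproduces the stated bound, and the ratio is $\eta=\Lambda^{sparse}/\Lambda^{dense}=BL^{1.5}\sqrt{P}/(BL^{2})=\sqrt{P/L}$, which becomes $\le\sqrt{P/L}$ once one notes that some rows carry fewer than $P$ active entries.

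The main obstacle I anticipate is making the support-restriction step airtight: one must argue that the $-\infty$ entries render the Softmax Jacobian block-diagonal on the active support, so that the effective aggregation dimension is \emph{exactly} the row support and the $L_{sm}$ constant is not itself inflated by the ambient length $L$. The subtler point is to verify that sparsity enters \emph{once}: the value multiplication must be bounded through $\|\mathbf{X}\mathbf{W}_V\|_{2}$ (which retains its dense $\sqrt{L}$ factor) rather than through a support-restricted aggregation, since otherwise a second $\sqrt{L}\to\sqrt{P}$ reduction would spuriously produce a contraction of order $P/L$ instead of the correct $\sqrt{P/L}$. A final, routine check is the dominance of $L_{QK}$ within the global Lipschitz product after the $L^{2}\to L^{1.5}\sqrt{P}$ reduction.
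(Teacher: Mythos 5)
Your proposal is correct and lands on the paper's bound by the same overall strategy — isolate $L_{QK}$ as the only mask-sensitive Lipschitz constant, replace a single $\sqrt{L}$ by $\sqrt{P}$ there, and rerun the covering-number/Dudley pipeline of Theorem~\ref{theorem: bound for single-layer ecct (main result)} verbatim — but you execute the key step in the forward direction, whereas the paper works backward through gradients. You bound the perturbation of the row-$i$ logits restricted to the $\le P$ active columns, $\|\delta\boldsymbol{\ell}_i\|_2\le\sqrt{P}\,b_x^2\|\delta\mathbf{W}_{QK}\|_2$, and push it forward by operator-norm submultiplicativity. The paper instead (Appendices~\ref{appendix: proof of impact of mask (theorem2)} and~\ref{appendix: sparsity of masked attention gradient}) proves a gradient-sparsity lemma (Lemma~\ref{lemma: the sparsity of gradient df/dU}: $\nabla_{\mathbf{U}}f$ is supported on $\Omega$), establishes $\|\nabla_{\mathbf{U}}f\|_F\le\|\nabla_{\mathbf{A}}f\|_F$ via a variance argument on the row-wise Softmax Jacobian, derives the element-wise bound $|\partial f/\partial A_{ij}|\le b_x w K$ from the position-wise structure of the downstream layers, and multiplies by $\sqrt{|\Omega|}\le\sqrt{LP}$. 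The two routes are dual under the gradient characterization of Lipschitz constants (the paper's Theorem~\ref{theorem: get Lipschitz constant}) and yield the identical $L^{1.5}\sqrt{P}$; your forward version is the more elementary (no Softmax-Jacobian variance bound or element-wise gradient bound needed), while the paper's gradient formulation is the one it reuses in the multi-layer extension (Theorem~\ref{theorem: multi-layer ecct bound}), where sparsity must be tracked through backpropagated Jacobians across layers. The two checkpoints you flag — that the $-\infty$ entries annihilate the Softmax Jacobian outside the active support, and that sparsity must enter exactly once because the value path keeps its dense $\sqrt{L}$ — are precisely the points the paper's Lemma~\ref{lemma: the sparsity of gradient df/dU} and its remark that the remaining per-matrix Lipschitz bounds stay at their dense values are designed to secure, so no gap remains.
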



\begin{proof}[Proof-Sketch of Theorem~\ref{theorem: impact of mask}]
    The proof largely follows the framework of Theorem \ref{theorem: bound for single-layer ecct (main result)}. The critical deviation lies in the analysis of the backward propagation through the attention mechanism. We explicitly prove that the gradient of the loss with respect to the attention scores strictly inherits the sparsity structure defined by the mask $\operatorname{mask}(\mathbf{H})$. Leveraging this property and the differentiability of multivariate continuous functions, we tighten the spectral norm upper bound in the Lipschitz constant estimation for the query-key interaction term, thereby reducing the effective radius of the covering number. Detailed derivations and the proof of gradient sparsity are provided in Appendix \ref{appendix: proof of impact of mask (theorem2)}.
\end{proof}


\subsection{Generalization Bound for Multi-Layer ECCT}

Having established the generalization bound for the single-layer ECCT, we now turn to the multi-layer case. A multi-layer ECCT is a direct extension of the single-layer model, obtained by stacking multiple attention layers.

\begin{definition}[Multi-Layer ECCT]\label{def: multi-layer ECCT}
    Keeping with the definitions and notation previously set forth, we will define ECCT $f^{(T)}$ in $\mathcal{F}_{ECCT,T}$ with $T$ attention layers. Let $\mathcal{W}^{(i)}=\{\mathbf{W}_{QK}^{(i)},\mathbf{W}_{V}^{(i)},\mathbf{W}_{F1}^{(i)},\mathbf{W}_{F2}^{(i)}\}$, and we define the $i$-th attention layer as:
    \begin{equation}
        \begin{aligned}
        \mathbf{X}^{(i)}&=\phi^{(i)}\left(\mathbf{X}^{(i-1)};\mathcal{W}^{(i)}\right)\\  &=\left(\sigma\left[\left(\operatorname{Softmax}\left(\mathbf{X}^{(i-1)} \mathbf{W}_{QK}^{(i)} (\mathbf{X}^{(i-1)})^{\top}\right) \mathbf{X}^{(i-1)} \mathbf{W}_{V}^{(i)}\right) \mathbf{W}_{F1}^{(i)}\right]\right) \mathbf{W}_{F2}^{(i)},
        \end{aligned}
    \end{equation}
    where $\mathbf{X}^{(i-1)}(i=1,\ldots,T)$ denotes the input of the attention layer, and $\mathbf{X}^{(0)}$ denotes the input of the ECCT decoder. For the $j$-th bit output, it can be written as $\hat{\mathbf{z}}[j]=\mathbf{W}_{o 1}^{\top} (\mathbf{X}^{(T)})^{\top} \mathbf{W}_{o 2}[:, j]$, by keeping the same with Definition \ref{def: ECCT single layer}.
\end{definition}

\begin{theorem}[Generalizaiton Bound for Multi-Layer ECCT]\label{theorem: multi-layer ecct bound}
    For any $T$-layer ECCT $f^{(T)} \in \mathcal{F}_{ECCT,T}$ equipped with $P$-sparse attention and any $\delta \in (0,1)$, with probability at least $1-\delta$, the generalization gap can be bounded as follows:
    \begin{equation}
        \mathcal{R}_{\mathrm{BER}}(f)-\hat{\mathcal{R}}_{\mathrm{BER}}(f) \leq \frac{4}{\sqrt{m}}+\sqrt{\frac{\log (1 / \delta)}{2 m}}+12 \sqrt{\frac{(L+(2 u+2) d^{2} T) \log (6 \sqrt{m d}\Lambda^{(T)}) }{m}},
    \end{equation}
    where $\Lambda^{(T)}=b_{o1}L_{\sigma}B_{V}B_{F1}B_{F2}b_{x}^{3}wL^{1.5}\sqrt{P}\left(\sqrt{P}B_{V}B_{F1}B_{F2}(1+2B_{QK}b_{x}^{2})\right)^{T-1}$ represents the global Lipschitz bound for the ECCT decoder. Furthermore, compared to the global Lipschitz bound for the dense version of the $T$-layer ECCT decoder, the sparsity constraint induces a complexity contraction factor:
    \begin{equation}
        \eta^{(T)} \leq \left(\sqrt{\frac{P}{L}}\right)^{T}.
    \end{equation}
\end{theorem}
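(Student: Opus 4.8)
The plan is to follow the same three-stage pipeline used in the proofs of Theorems~\ref{theorem: bound for single-layer ecct (main result)} and \ref{theorem: impact of mask}: reduce the generalization gap to the bit-wise Rademacher complexity via Corollary~\ref{prop: ECCT generalization bound vs bit-wise Rad complexity}, bound that complexity through Dudley's entropy integral (the bounded-class Lemma), and control the covering number of $\mathcal{F}_{ECCT,T}$ by a Cartesian product of per-matrix spectral-norm covers. First I would carry out the covering-number decomposition: following \cite{chen2020generalizationofafamilyofRNN,edelman2022inductivebiasTransformerGeneralization}, an $\epsilon$-cover of the full network is assembled from covers of each weight matrix across all $T$ layers together with the shared embedding and output matrices. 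Counting free parameters then yields the $L+(2u+2)d^{2}T$ prefactor inside the square root, since each attention layer contributes the $2d^{2}+2ud^{2}=(2u+2)d^{2}$ parameters of $\{\mathbf{W}_{QK}^{(i)},\mathbf{W}_{V}^{(i)},\mathbf{W}_{F1}^{(i)},\mathbf{W}_{F2}^{(i)}\}$, while the $L$-dimensional column $\mathbf{W}_{o2}[:,j]$ supplies the additive $L$ term that is independent of depth.

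The core new ingredient is propagating Lipschitz constants through the $T$-fold composition, which I would handle in two sub-steps. First, \emph{forward stability}: by induction on $i$, bound the activation norms $\|\mathbf{X}^{(i)}\|$ so that a uniform row-norm bound $b_{x}$ may be invoked at every layer when estimating local derivatives. Second, \emph{the per-layer amplification}: using the differentiability argument of \cite[Theorem 3.1.6]{federer2014geometricmeasure} and the gradient-based Lipschitz estimates of \cite{virmaux2018lipschitzConstantEstimation}, I would show that the Jacobian of one attention layer $\phi^{(i)}(\cdot;\mathcal{W}^{(i)})$ with respect to its input has spectral norm at most $\rho\triangleq\sqrt{P}\,B_{V}B_{F1}B_{F2}(1+2B_{QK}b_{x}^{2})$, where the $\sqrt{P}$ is inherited from the sparsity of the masked-softmax Jacobian exactly as in the proof of Theorem~\ref{theorem: impact of mask}, the ``$1$'' arises from differentiating the value/FFN path $A\mathbf{X}\mathbf{W}_{V}$ through the explicit $\mathbf{X}$, and the ``$2B_{QK}b_{x}^{2}$'' from the two product-rule terms produced by differentiating through the attention scores $\mathbf{X}\mathbf{W}_{QK}\mathbf{X}^{\top}$. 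Chaining these, a perturbation of a weight matrix in layer $t$ propagates forward through layers $t+1,\dots,T$ and the output layer, so its induced output-Lipschitz constant scales like (the local layer-$t$ constant)$\times\rho^{\,T-t}\times b_{o1}$; the global bound $\Lambda^{(T)}$ is the worst case $t=1$, which yields the displayed $\rho^{T-1}$ factor together with the layer-$1$ prefactor $b_{o1}L_{\sigma}B_{V}B_{F1}B_{F2}b_{x}^{3}wL^{1.5}\sqrt{P}$.

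With $\Lambda^{(T)}$ established, the radius of the product cover is $O(\Lambda^{(T)})$, so $\log\mathcal{N}\lesssim (L+(2u+2)d^{2}T)\log\!\big(\Lambda^{(T)}\sqrt{md}\big)$; substituting into Dudley's integral, taking $\alpha=1/\sqrt{m}$, and adding the PAC term $\sqrt{\log(1/\delta)/(2m)}$ gives the stated bound, with the constant $6$ inside the logarithm absorbing the combined cover radii. The contraction factor then follows by counting powers of $\sqrt{P}$: $\Lambda^{(T)}$ carries one factor $\sqrt{P}$ in its prefactor and one more in each of the $T-1$ copies of $\rho$, i.e. $(\sqrt{P})^{T}$ in total, whereas the dense network replaces every $\sqrt{P}$ by the full-attention softmax bound $\sqrt{L}$; taking the ratio gives $\eta^{(T)}\le(\sqrt{P/L})^{T}$. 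The main obstacle I anticipate is precisely this layer-wise Lipschitz recursion: I must verify forward stability carefully so that the same $b_{x}$ is legitimately reusable at each layer (absent an explicit normalization, activation norms could otherwise compound with depth), and I must confirm that the \emph{gradient} sparsity through the masked softmax—proved for a single layer in Theorem~\ref{theorem: impact of mask}—is preserved under composition, so that each of the $T$ layers contributes a clean $\sqrt{P}$ rather than $\sqrt{L}$. The remaining delicate bookkeeping is to pin down which weight matrix attains the global maximum, thereby confirming that the exponent is exactly $T-1$ and not $T$.
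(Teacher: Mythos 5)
Your proposal follows essentially the same route as the paper's proof: a per-matrix covering-number product yielding the $L+(2u+2)d^{2}T$ exponent, a per-layer input-Jacobian bound $\sqrt{P}\,B_{V}B_{F1}B_{F2}(1+2B_{QK}b_{x}^{2})$ obtained by the same product-rule split into the attention branch ($2B_{QK}b_{x}^{2}$ term, with $\sqrt{P}$ from masked-softmax sparsity) and the value/FFN branch, chained to the worst-case bottom layer to give the $(\cdot)^{T-1}$ factor, and the dense-vs-sparse comparison by swapping $\sqrt{P}$ for $\sqrt{L}$ to get $\eta^{(T)}\le(\sqrt{P/L})^{T}$. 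The two caveats you flag—forward stability of the activation bound $b_{x}$ across layers and persistence of gradient sparsity under composition—are real and are handled only implicitly in the paper, so your plan is, if anything, slightly more careful on those points.
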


\begin{proof}[Proof-Sketch of Theorem~\ref{theorem: multi-layer ecct bound}]
    The proof follows the analytical trajectory established above. However, the extension to the multi-layer setting ($T>1$) introduces recursive dependencies using the chain rule. Since the decoder $f^{(T)}$ is a composition of $T$ attention layers, the sensitivity of the output with respect to the weights at the $i$ layer in $\mathcal{W}^{(i)}$, depends on the gradient backpropagation path through all subsequent layers $t \in \{t+1, \ldots, T\}$. Moreover, we leverage the structural alignment between the gradient sparsity pattern and the attention mask $\operatorname{mask}(\mathbf{H})$. Specifically, compared to the dense (or unmasked) attention, the $P$-sparse attention results in a global complexity reduction scaling with $\left(\sqrt{P/L}\right)^{T}$ for the bottom-most layer ($i=1$). Substituting this tightened global Lipschitz bound into the generalization bound framework completes the proof.
\end{proof}


In the presence of additive white Gaussian noise (AWGN), the input $\mathbf{X}$ is theoretically unbounded. We investigate the effect of the noise level under the AWGN channel on the generalization bound, as stated in Theorem \ref{theorem: bound for AWGN}.

\begin{theorem}[Generalization Bound for AWGN Channel]\label{theorem: bound for AWGN}
     For any $T$-layer ECCT $f^{(T)} \in \mathcal{F}_{ECCT,T}$ equipped with $P$-sparse attention, when the input is unbounded, and any $\delta \in (0,1)$, with probability at least $1-\delta$, the generalization gap can be bounded as follows:
    \begin{equation}
        \begin{aligned}
        \mathcal{R}_{\mathrm{BER}}(f)-\hat{\mathcal{R}}_{\mathrm{BER}}(f) \leq &\frac{4}{\sqrt{m}}+\sqrt{\frac{\log (1 / \delta)}{2 m}}+ \\
        &\min_{b_x} \left\{\mathcal{B}(T, L, P, d, m, b_x) + Pr(\exists i, \left\|\mathbf{X}[i,:]\right\|_2>b_x)\right\},
        \end{aligned}
    \end{equation}
    where $\mathcal{B}(T, L, P, d, m) =  12 \sqrt{\frac{(L+(2 u+2) d^{2} T) \log (6 \sqrt{m d}\Lambda^{(T)}) }{m}}$. Under an AWGN channel with noise variance $\rho^2$, using BPSK modulation, we have $Pr(\exists i, \left\|\mathbf{X}[i,:]\right\|_2>b_x) = n \left[Q\left(\frac{b_{x}-B_{e m b}}{\rho B_{e m b}}\right)+Q\left(\frac{b_{x}+B_{e m b}}{\rho B_{e m b}}\right)\right]$.
\end{theorem}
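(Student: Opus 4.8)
The plan is to reduce the unbounded-input case to the bounded-input bound of Theorem~\ref{theorem: multi-layer ecct bound} via a truncation argument, and to pay explicitly for the probability mass on which the truncation is violated. Fix a threshold $b_x>0$ and define the bad event $B_{b_x}=\{\exists i:\|\mathbf{X}[i,:]\|_2>b_x\}$ with per-sample probability $p(b_x)=\Pr(B_{b_x})$. I would decompose the generalization gap through the ``restricted'' risks $\mathcal{R}_{bd}$ and $\hat{\mathcal{R}}_{bd}$, defined by masking the BER loss with the indicator that \emph{all} rows of the input satisfy Assumption~\ref{assumption: bounded input and weight} at level $b_x$, writing
\begin{equation}
\mathcal{R}_{\mathrm{BER}}(f)-\hat{\mathcal{R}}_{\mathrm{BER}}(f) = \big[\mathcal{R}_{\mathrm{BER}}(f)-\mathcal{R}_{bd}(f)\big] + \big[\mathcal{R}_{bd}(f)-\hat{\mathcal{R}}_{bd}(f)\big] + \big[\hat{\mathcal{R}}_{bd}(f)-\hat{\mathcal{R}}_{\mathrm{BER}}(f)\big].
\end{equation}
Because $l_{\text{BER}}\in[0,1]$, the first bracket equals $\mathbb{E}[l_{\text{BER}}\cdot\mathbb{I}(B_{b_x})]\le p(b_x)$, while the third bracket is $\le 0$ since masking only removes nonnegative loss terms from the empirical average. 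This isolates the middle bracket as the only term requiring a complexity argument, and charges the tail exactly once, on the population side.

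For the middle term I would argue that the restricted loss class inherits the covering-number bounds underlying Theorem~\ref{theorem: multi-layer ecct bound}. The masking indicator does not depend on $f$, so on the bad samples every hypothesis contributes identically to the bit-wise Rademacher sum, and on the good samples the inputs satisfy $\|\mathbf{X}[i,:]\|_2\le b_x$, exactly the regime in which the Lipschitz and covering estimates of Theorem~\ref{theorem: multi-layer ecct bound} hold. Consequently the empirical Rademacher complexity of the restricted loss class is controlled by the same Dudley integral, yielding $\mathcal{R}_{bd}(f)-\hat{\mathcal{R}}_{bd}(f)\le \mathcal{B}(T,L,P,d,m,b_x)+\sqrt{\log(1/\delta)/2m}$, where the explicit dependence on $b_x$ enters through $\Lambda^{(T)}$ (via the $b_x^3$ and $b_x^2$ factors). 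Since the concentration term and the $4/\sqrt{m}$ offset from Dudley are $b_x$-free, combining the three brackets and optimizing over the free parameter $b_x$ produces the $\min_{b_x}\{\mathcal{B}(T,L,P,d,m,b_x)+p(b_x)\}$ in the statement: enlarging $b_x$ inflates $\mathcal{B}$ through $\Lambda^{(T)}$ but shrinks the tail $p(b_x)$, and the minimum balances the two.

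It remains to compute $p(b_x)$ under BPSK transmission over an AWGN channel of noise variance $\rho^2$. Only the first $n$ coordinates of $\tilde{\mathbf{y}}$ are received-signal magnitudes and hence unbounded; the $r$ syndrome coordinates are bounded, so a union bound over the $n$ magnitude rows produces the prefactor $n$. For such a row, the broadcast embedding gives $\|\mathbf{X}[i,:]\|_2=|y_i|\,\|\mathbf{W}_{emb}[i,:]\|_2$, which I would bound at the worst-case row norm $B_{emb}$, reducing the event to $|y_i|>b_x/B_{emb}$. Writing $y_i=1+n_i$ with $n_i\sim\mathcal{N}(0,\rho^2)$ for the normalized all-zero-codeword signal, the folded-Gaussian tail $\Pr(|y_i|>b_x/B_{emb})$ splits into the two one-sided pieces $Q\big(\tfrac{b_x-B_{emb}}{\rho B_{emb}}\big)$ and $Q\big(\tfrac{b_x+B_{emb}}{\rho B_{emb}}\big)$, giving the claimed expression for the probability after multiplying by $n$.

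I expect the main obstacle to be the middle step: rigorously transferring the covering-number bound of Theorem~\ref{theorem: multi-layer ecct bound}, which is stated for a function class over \emph{a priori} bounded inputs, to the indicator-masked loss class evaluated on genuinely random (possibly unbounded) samples, while keeping the $b_x$-dependence explicit enough to optimize. The delicate points are ensuring the masking indicator does not break the product-of-coverings decomposition of the covering number and confirming the sign of the empirical-side correction, so that the tail probability is indeed incurred only on the population risk.
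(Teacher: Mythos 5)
Your proposal is correct and follows essentially the same route as the paper: truncate the input at a threshold $b_x$, invoke the bounded-input bound of Theorem~\ref{theorem: multi-layer ecct bound} on the good event, charge the bad event via a union bound over the $n$ magnitude rows with the folded-Gaussian tail $Q\bigl(\tfrac{b_x - B_{emb}}{\rho B_{emb}}\bigr)+Q\bigl(\tfrac{b_x + B_{emb}}{\rho B_{emb}}\bigr)$, and optimize over $b_x$. The only difference is bookkeeping: the paper decomposes via the law of total expectation conditioned on $\mathcal{E}$ and bounds $\mathbb{E}[\Delta\mid\mathcal{E}^c]\Pr(\mathcal{E}^c)\le 1\cdot\Pr(\mathcal{E}^c)$, whereas your three-bracket telescoping with an indicator-masked loss is a slightly more explicit rendering of the same step.
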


\begin{proof}[Proof-Sketch of Theorem~\ref{theorem: bound for AWGN}]
    Unbounded input violates the standard Lipschitz continuity requirement for global generalization analysis. To resolve this, we partition the sample space into the regime $\mathcal{E}=\left\{\forall i \in[L],\left\|\mathbf{X}[i,:]\right\|_{2} \leq b_{x}\right\}$ and its complement $\mathcal{E}^c$. While the generalization gap in $\mathcal{E}$ is controlled by the Rademacher complexity of the ECCT, the gap in $\mathcal{E}^c$ cannot be bounded by Lipschitz-based methods. For the regime $\mathcal{E}^c$, we analyze the BPSK-modulated Gaussian inputs, where the probability for any $i$-th position of the input sequence is governed by the Q-function ($i=1,\ldots,n$). Then, we apply the union bound over the first $n$ positions. The final theorem is obtained by minimizing the sum of the bounded complexity term and the probability for $\mathcal{E}^c$ with respect to $b_x$.
\end{proof}

\begin{remark}
    Here, we distinguish the roles of the input bound $b_x$ and the spectral norm bound $B_{emb}$ of the embedding weight matrix in Theorem \ref{theorem: bound for AWGN}. The term $b_x$ represents a variational optimization in the theoretical analysis: since the inequality holds for any arbitrary $b_x>0$, the tightest guarantee is given by the specific $b_{x}^{*}$ that balances the bounded complexity term $\mathcal{B}(T, L, P, d, m)$ and the probability term $Pr(\exists i, \left\|\mathbf{X}[i,:]\right\|_2>b_x)$. In contrast, $B_{emb}$ is a structural constraint determined by the network initialization and regularization. For a fixed trained model, $B_{emb}$ is a constant. Consequently, the bound implies that the generalization gap is controlled not only by the tail behavior of the AWGN noise but also by the magnitude of the embedding weights. This provides a theoretical justification for enforcing $\|\mathbf{W}_{emb}\|_2\leq B_{emb}$ via weight decay \cite{krogh1991simpleweightdecay,loshchilov2017decoupledweightdecay} or spectral normalization \cite{miyato2018spectralnormalization} during training in practice to control the complexity term effectively.
\end{remark}

\subsection{Asymptotic Scaling and Insights}


\textbf{Code parameters}: As implied by Theorem \ref{theorem: bound for single-layer ecct (main result)}, the generalization gap scales with the input sequence length $L$ as $\mathcal{O}(\sqrt{L})$. Furthermore, $L$ is determined jointly by the code blocklength $n$ and the number of rows $r$ of the parity-check matrix. It is typically assumed that the number of parity-check equations does not exceed the blocklength, which implies $n<L<2n$, and hence $L$ scales linearly with $n$ in the asymptotic regime. In the standard ECCT setting \cite{choukroun2022ECCT}, the parity-check matrix has $r=n-k$ rows, leading to $L=n(2-k/n)=n(2-R)$, where $R$ denotes the code rate. Consequently, as $n$ becomes large, the generalization error bound exhibits an asymptotic growth rate of $\mathcal{O}(\sqrt{n})$. Moreover, the generalization gap scales with the training set size $m$ as $\mathcal{O}(\frac{1}{\sqrt{m}})$. This implies that maintaining a fixed generalization accuracy for longer codes requires the training set size to scale at least linearly with the blocklength.

\textbf{Number of Attention Layers}: Extending the single-layer result to the multi-layer setting, Theorem \ref{theorem: multi-layer ecct bound} shows that increasing depth amplifies the Lipschitz constant multiplicatively across layers, leading to an accumulated effect in the covering number. Consequently, the generalization bound scales with $T$ as $\mathcal{O}(T)$, reflecting the classical depth–complexity trade-off in deep architectures: deeper ECCTs can model more expressive decoding functions, but this increased expressiveness comes at the cost of a larger hypothesis space and thus weaker generalization guarantees unless compensated by more data or stronger regularization, i.e., by increasing $m$ or enforcing tighter bounds on the norms of the weight matrices that control the network’s Lipschitz constant during training \cite{krogh1991simpleweightdecay,loshchilov2017decoupledweightdecay,miyato2018spectralnormalization}.

\textbf{Sparse (Masked) Attention}: The sparsity level $P$ explicitly reduces the dependency of the complexity term on the sequence length $L$. Specifically, since each query interacts with at most $P$ keys in $\mathbf{A}^{mask}$, the column sum of the attention matrix is bounded by $P$, tightening its spectral norm bound to $\sqrt{P}$. Consequently, the complexity factor inside the logarithmic covering number, namely the global Lipschitz bound $\Lambda^{(T)}$, is reduced from $\mathcal{O}(L^{2T})$ to $\mathcal{O}(L^{1.5}P^{T/2})$. This theoretical result suggests that in long-code tasks where $L\gg P$, the sparse model effectively reduces the volume of the hypothesis space by limiting the interaction range of each position of the sequence, thereby mitigating the risk of overfitting. Several ECCT variants \cite{park2025crossmpt,park2025multipleMasktforECCT} exploit mathematically equivalent forms of parity-check equations to induce sparser attention masks and have shown empirical gains, inspired by the success of sparse attention in Transformers. However, a theoretical explanation for this decoder design has been missing. Our results provide the first learning-theoretic justification: parity-check–induced sparsity provably tightens the generalization bound, with a gain that grows exponentially with the depth $T$.

\section{Experiments}

\begin{figure}[!t]
    \centering
    \includegraphics[width=0.7\textwidth]{./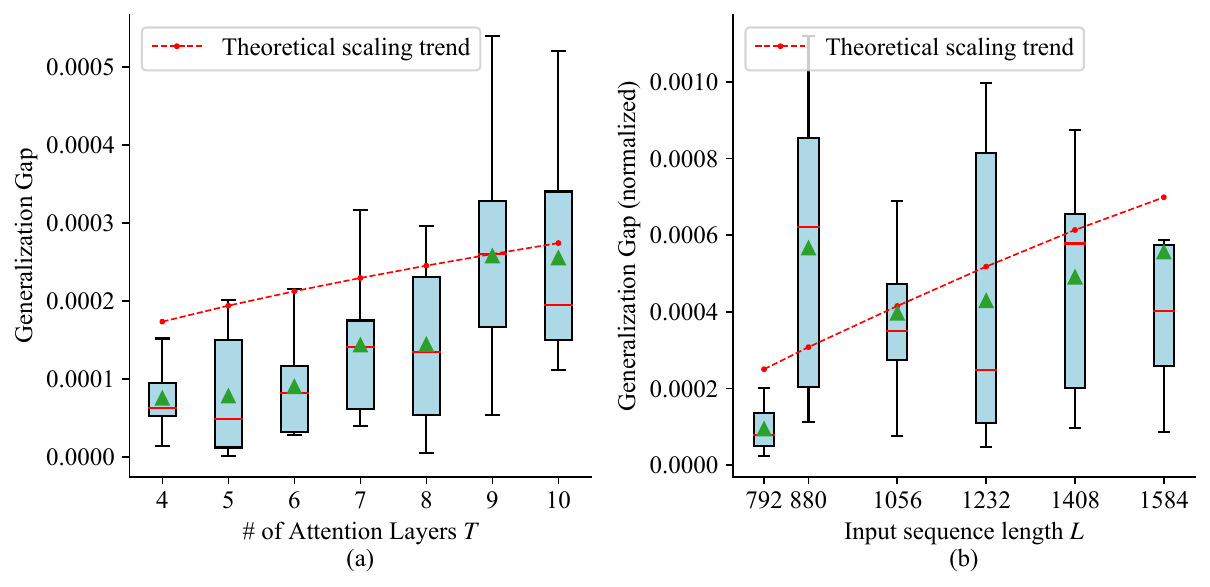}
    \caption{Generalization gap versus (a) number of attention layers $T$ and (b) input sequence length $L$. The theoretical scaling trend is obtained by substituting the corresponding numerical values.}
    \label{fig: bound visualization}
\end{figure}

In Fig. \ref{fig: bound visualization}, we report numerical results to corroborate the theoretical scaling laws. We consider BPSK transmission over an AWGN channel with $E_b/N_0=2$ dB, embedding dimension $d=32$, and training set size $m=12{,}800$. Each experiment is repeated over 10 independent random trials, and the results are summarized using boxplots. The generalization gap (difference between average training BER and testing BER) is evaluated as a function of (a) the number of attention layers $T$ and (b) the input sequence length $L$. In Fig. \ref{fig: bound visualization} (a), we use the CCSDS $(128,64)$ code to study the scaling with $T$. In Fig. \ref{fig: bound visualization} (b), with $T=1$, we examine the dependence on $L$ using a family of descendant codes derived from the WiMAX $(1056,528)$ LDPC code by masking the rightmost double-diagonal structure of the parity-check matrix, thereby largely preserving the structural properties of the parent code. To remove the effect of code rate variation, the generalization gap is normalized by the empirical BER \cite{adiga2024generalizationboundforNBP}. The observed trends are consistent with the theoretical predictions of Theorems \ref{theorem: bound for single-layer ecct (main result)} and \ref{theorem: multi-layer ecct bound}. In particular, both the empirical results and the theoretical bounds exhibit comparable monotonic behaviors with respect to $T$ and $L$. Due to the finite experimental regime, the results are not intended to identify the exact asymptotic exponents, but rather to validate the qualitative scaling behavior predicted by the theory.

\section{Conclusion}
In this paper, we provided the first learning-theoretic generalization guarantees for Transformer-based neural decoders. By establishing a connection between multiplicative noise estimation errors and the decoded BER, we derived upper bounds on the generalization gap of ECCT via the bit-wise Rademacher complexity. The analysis characterizes how the bound scales with key code and model parameters, and extends naturally from single-layer to multi-layer architectures. Moreover, we showed that parity-check–based masked attention induces sparsity that provably tightens the global Lipschitz bound, reduces the effective hypothesis space, and yields a strictly tighter generalization bound compared to unmasked attention. These results offer theoretical insights into the scalability and design of ECCT decoders from a generalization perspective.

\appendices

\section{Proof of Theorem \ref{theorem: bound for single-layer ecct (main result)}}\label{appendix: proof of theorem 1}

Before proving Theorem \ref{theorem: bound for single-layer ecct (main result)}, we first present Lipschitz constants with respect to each weight matrix of an ECCT defined in Definition \ref{def: ECCT single layer} (details can be found in Appendix \ref{appendix: bounding Lips constants}). These results are subsequently used to obtain the covering numbers in the weight space, and finally to derive the Rademacher complexity and the corresponding generalization bound.

Building upon the derived Lipschitz constant for each individual weight matrix (e.g., $\mathbf{W}_{QK}, \mathbf{W}_{V}$, etc.),  we have established the sensitivity of the ECCT model to perturbations. Based on the fact that the output is Lipschitz continuous with respect to each weight, we demonstrate below that the entire decoder $f$ is also Lipschitz continuous with respect to the collective set of weights. This transition serves as a necessary step for estimating the overall covering number of the proposed model, as it allows us to characterize the stability of the entire parameter space.

\begin{theorem}[Lipschitzness of ECCT]\label{theorem: ECCT is Lipschitz continuous}
    Let the ECCT $f(\cdot;\mathcal{W}) \in \mathcal{F}_{ECCT}$ be a function with the weight set: $$\mathcal{W}=\{\mathbf{W}_{QK},\mathbf{W}_{V},\mathbf{W}_{F1},\mathbf{W}_{F2},\mathbf{W}_{o1},\mathbf{W}_{o2}\}.$$ For any two sets of weights $\mathcal{W}$ and $\mathcal{W}^{\prime}$, the variation in the decoder's $j$-th output is bounded as follows:
    \begin{equation}
        \left|f(\mathbf{X};\mathcal{W})[j]-f\left(\mathbf{X};\mathcal{W}^{\prime}\right)[j]\right| \leq \sum_{i \in \mathcal{I}} L_{i}\left\|\mathbf{W}_{i}-\mathbf{W}_{i}^{\prime}\right\|_{F},
        \label{eq: ECCT Lipschitz continuous}
    \end{equation}
    where $\mathcal{I}=\{Q K, V, F 1, F 2, o 1, o 2\}$ is the index set of the weights, and $L_{i}$ denotes the Lipschitz constant with respect to the weight matrix with index $i$, as derived in the preceding analysis.
\end{theorem}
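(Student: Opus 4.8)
The plan is to reduce the joint Lipschitz estimate to the single-weight estimates already in hand via a telescoping (hybrid) decomposition over the six weight matrices. Fix an ordering of the index set $\mathcal{I}=\{QK,V,F1,F2,o1,o2\}$ and interpolate between the two parameter sets one coordinate at a time: let $\mathcal{W}^{[0]}=\mathcal{W}$, let $\mathcal{W}^{[|\mathcal{I}|]}=\mathcal{W}'$, and let $\mathcal{W}^{[k]}$ be the hybrid set that uses the primed matrices $\mathbf{W}'_{i}$ on the first $k$ indices and the unprimed matrices $\mathbf{W}_{i}$ on the remaining ones, so that $\mathcal{W}^{[k-1]}$ and $\mathcal{W}^{[k]}$ differ in exactly one matrix. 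The triangle inequality then gives
\begin{equation}
\left|f(\mathbf{X};\mathcal{W})[j]-f(\mathbf{X};\mathcal{W}')[j]\right| \leq \sum_{k=1}^{|\mathcal{I}|}\left|f(\mathbf{X};\mathcal{W}^{[k-1]})[j]-f(\mathbf{X};\mathcal{W}^{[k]})[j]\right|,
\end{equation}
and each summand is a perturbation in a \emph{single} weight matrix with all others held fixed, which is precisely the situation covered by the per-matrix Lipschitz constants $L_i$ derived earlier.

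The second step is to bound each single-coordinate term. For the index $i_k$ at which $\mathcal{W}^{[k-1]}$ and $\mathcal{W}^{[k]}$ disagree, I would invoke the per-matrix Lipschitz bound established in the preceding analysis to obtain $|f(\mathbf{X};\mathcal{W}^{[k-1]})[j]-f(\mathbf{X};\mathcal{W}^{[k]})[j]|\leq L_{i_k}\|\mathbf{W}_{i_k}-\mathbf{W}'_{i_k}\|_{F}$. Summing over $k$ collapses the telescoped bound into $\sum_{i\in\mathcal{I}}L_i\|\mathbf{W}_i-\mathbf{W}'_i\|_F$, which is exactly \eqref{eq: ECCT Lipschitz continuous}. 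Equivalently, and more in the spirit of the cited tools, one can integrate along the straight-line path $\mathcal{W}(t)=(1-t)\mathcal{W}+t\mathcal{W}'$, $t\in[0,1]$: since $f(\mathbf{X};\cdot)[j]$ is differentiable in the weights (Federer's Theorem 3.1.6, as the map is Lipschitz in each argument), the fundamental theorem of calculus yields $f(\mathbf{X};\mathcal{W}')[j]-f(\mathbf{X};\mathcal{W})[j]=\int_0^1\sum_{i\in\mathcal{I}}\langle\nabla_{\mathbf{W}_i}f,\mathbf{W}'_i-\mathbf{W}_i\rangle\,dt$, and bounding each gradient by its Frobenius-norm estimate (the gradient-to-Lipschitz principle of \cite{virmaux2018lipschitzConstantEstimation}) reproduces the same sum.

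The main obstacle is not the decomposition itself but the \emph{uniformity} of the per-matrix constants. In the telescoping bound, the matrices held fixed at each step form a mixture of primed and unprimed entries, so the constant $L_{i_k}$ must be valid for the whole admissible parameter region rather than at one specific configuration. I would resolve this by noting that the per-matrix constants were obtained as worst-case gradient bounds under Assumption~\ref{assumption: bounded input and weight}, i.e., over the product of spectral-norm balls $\{\|\mathbf{W}_i\|\leq B_i\}$ together with the input bound $\|\mathbf{x}_l\|_2\leq b_x$. Since each hybrid set $\mathcal{W}^{[k]}$ has every matrix equal to either $\mathbf{W}_i$ or $\mathbf{W}'_i$ (both admissible), all intermediate configurations remain in this region; and because a spectral-norm ball is convex, the interpolants $(1-t)\mathbf{W}_i+t\mathbf{W}'_i$ along the integral path are admissible as well. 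Thus the same constants $L_i$ apply at every step regardless of the chosen ordering, and the argument closes without any dependence on which coordinates have already been switched.
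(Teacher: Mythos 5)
Your proposal is correct and follows essentially the same route as the paper's proof: a telescoping decomposition through hybrid weight sets differing in one matrix at a time, the triangle inequality, and the per-matrix Lipschitz constants. Your added discussion of why the constants $L_i$ remain valid uniformly over the hybrid configurations (convexity of the norm balls under Assumption~\ref{assumption: bounded input and weight}) is a worthwhile point that the paper leaves implicit, but it does not change the argument.
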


\begin{proof}
    To characterize the impact of collective weight variations on the output, we employ a telescoping sum decomposition \cite{thomson2008elementaryrealanalysis}. We define a sequence of intermediate weight sets $\left\{\mathcal{W}^{(0)}, \mathcal{W}^{(1)}, \ldots, \mathcal{W}^{(6)}\right\}$ such that any two adjacent sets differ by exactly one weight matrix:
    \begin{itemize}
        \item $\mathcal{W}^{(0)}=\left\{\mathbf{W}_{Q K}^{\prime}, \mathbf{W}_{V}^{\prime}, \mathbf{W}_{F 1}^{\prime}, \mathbf{W}_{F 2}^{\prime}, \mathbf{W}_{o 1}^{\prime}, \mathbf{W}_{o 2}^{\prime}\right\}$ (corresponding to the set $\mathcal{W}^{\prime}$),
        \item $\mathcal{W}^{(1)}=\left\{\mathbf{W}_{Q K}, \mathbf{W}_{V}^{\prime}, \mathbf{W}_{F 1}^{\prime}, \mathbf{W}_{F 2}^{\prime}, \mathbf{W}_{o 1}^{\prime}, \mathbf{W}_{o 2}^{\prime}\right\}$,
        \item $\ldots$
        \item $\mathcal{W}^{(6)}=\left\{\mathbf{W}_{Q K}, \mathbf{W}_{V}, \mathbf{W}_{F 1}, \mathbf{W}_{F 2}, \mathbf{W}_{o 1}, \mathbf{W}_{o 2}\right\}$ (corresponding to the set $\mathcal{W}$).
    \end{itemize}
    The total difference between $f(\mathbf{X};\mathcal{W})[j]$ and $f\left(\mathbf{X};\mathcal{W}^{\prime}\right)[j]$ can be decomposed as the sum of incremental changes:
    \begin{equation}
        f(\mathbf{X};\mathcal{W})[j]-f\left(\mathbf{X};\mathcal{W}^{\prime}\right)[j] = \sum_{i=1}^{6}\left(f(\mathbf{X};\mathcal{W}^{(i)})[j]-f(\mathbf{X};\mathcal{W}^{(i-1)})[j]\right).
    \end{equation}
    By applying the triangle inequality, we obtain:
    \begin{equation}
        f(\mathbf{X};\mathcal{W})[j]-f\left(\mathbf{X};\mathcal{W}^{\prime}\right)[j] \leq \sum_{i=1}^{6}\left|f(\mathbf{X};\mathcal{W}^{(i)})[j]-f(\mathbf{X};\mathcal{W}^{(i-1)})[j]\right|.
    \end{equation}
    Based on the individual Lipschitz continuity established previously for each weight matrix, each absolute value term is bounded by the product of its corresponding Lipschitz constant and the Frobenius norm of the weight difference. Consequently, it follows that as \eqref{eq: ECCT Lipschitz continuous}. This completes the proof.
\end{proof}

\begin{proof}[Proof of Theorem~\ref{theorem: bound for single-layer ecct (main result)}]
It is enough to construct the matrix covering of each weight matrix in $\mathcal{W}$. Moreover, their Cartesian product yields a covering of the function class $\mathcal{F}_{ECCT}$ \cite{chen2020generalizationofafamilyofRNN,edelman2022inductivebiasTransformerGeneralization}. Therefore, the covering number of  $\mathcal{F}_{ECCT}$ can be upper bounded by the product of all the covering numbers of weight matrices\cite{bartlett2017spectrallynormalizedmarginbounds}:
\begin{equation}
        \mathcal{N}(\mathcal{F}_{ECCT}[j],\epsilon,\|\cdot\|_{2}) \leq \prod_{i\in \mathcal{I}} \mathcal{N}(\mathbf{W}_{i},\frac{\epsilon}{6L_{i}},\|\cdot\|_{F}), 
        \label{eq: global covering number upper bound (prod)}
\end{equation}
where $\mathcal{I}=\{Q K, V, F 1, F 2, o 1, o 2\}$ is the index set of the weights. We denote $\mathcal{N}(\mathcal{F}_{ECCT}[j],\epsilon,\|\cdot\|_{2})$ as the covering number of the function class $\mathcal{F}_{ECCT}$ at scale $\epsilon$ with respect to the $L_2$ norm at the $j$-th position. Similarly, for each weight matrix $\mathbf{W}_{i}$ within the parameter set $\mathcal{W}$, let $ \mathcal{N}(\mathbf{W}_{i},\frac{\epsilon}{6L_{i}},\|\cdot\|_{F})$ represent the covering number of the corresponding weight space at scale $\frac{\epsilon}{6L_{i}}$.

Using Lemma 8 in \cite{chen2020generalizationofafamilyofRNN}, we can write out the upper bounds of covering numbers of weight matrices as follows:

\begin{align}
    &\mathcal{N}\left(\mathbf{W}_{QK}, \frac{\epsilon}{6 L_{QK}}, \| \cdot \|_{F}\right) \leq\left(1+\frac{2 \sqrt{d} B_{QK} \cdot 6 L_{QK}}{\epsilon}\right)^{d^{2}},\label{eq: covering number of QK}\\
    &\mathcal{N}\left(\mathbf{W}_{V}, \frac{\epsilon}{6 L_{V}}, \| \cdot \|_{F}\right) \leq\left(1+\frac{2 \sqrt{d} B_{V} \cdot 6 L_{V}}{\epsilon}\right)^{d^{2}},\\
    &\mathcal{N}\left(\mathbf{W}_{F1}, \frac{\epsilon}{6 L_{F1}}, \| \cdot \|_{F}\right) \leq\left(1+\frac{2 \sqrt{d} B_{F1} \cdot 6 L_{F1}}{\epsilon}\right)^{ud^{2}}, \\
    &\mathcal{N}\left(\mathbf{W}_{F2}, \frac{\epsilon}{6 L_{F2}}, \| \cdot \|_{F}\right) \leq\left(1+\frac{2 \sqrt{d} B_{F2} \cdot 6 L_{F2}}{\epsilon}\right)^{ud^{2}},\\
    &\mathcal{N}\left(\mathbf{W}_{o1}, \frac{\epsilon}{6 L_{o1}}, \| \cdot \|_{F}\right) \leq\left(1+\frac{2 b_{o1} \cdot 6 L_{o1}}{\epsilon}\right)^{d}, \label{eq: covering number of o1}\\
    &\mathcal{N}\left(\mathbf{W}_{o2}[:,j], \frac{\epsilon}{6 L_{o2}}, \| \cdot \|_{F}\right) \leq\left(1+\frac{2 b_{o2} \cdot 6 L_{o2}}{\epsilon}\right)^{L}.\label{eq: covering number of o2}
\end{align}

Substituting \eqref{eq: covering number of QK}--\eqref{eq: covering number of o2} in \eqref{eq: global covering number upper bound (prod)}, we further obtain:

\begin{equation}
    \mathcal{N}(\mathcal{F}_{ECCT}[j],\epsilon,\|\cdot\|_{2}) \leq \left(1+\frac{12 \sqrt{d} \cdot B L^{2}}{\epsilon}\right)^{L+(2 u+2) d^{2}},
    \label{eq: covering number of global bound}
\end{equation}
where $B=b_{o1}^{2} L_{\sigma} B_{F1}^{2} B_{F2}^{2} B_{V}^{2} B_{QK} b_{x}^{3}w^{2}$.

For any $j$-th output of the ECCT decoder, we relate the covering number to the Rademacher complexity via Dudley’s entropy integral \cite{dudley1967sizesofcompactsubsetsofhilbertspace} for bounded-output functions. Specifically, for the function class $\mathcal{F}_{ECCT}$, by Lemma A.5 in \cite{bartlett2017spectrallynormalizedmarginbounds}, the Rademacher complexity of its $j$-th output can be upper-bounded as follows:
\begin{equation}
    R_{m}(\mathcal{F}_{ECCT}[j]) \leq \inf_{\alpha>0}\left(\frac{4 \alpha}{\sqrt{m}}+\frac{12}{m} \int_{\alpha}^{\sqrt{m}} \sqrt{\log \mathcal{N}\left(\mathcal{F}_{ECCT}[j], \epsilon,\|\cdot \|_{2} \right)}d \epsilon\right).
    \label{eq: Rademacher by Dudley inter bound}
\end{equation}

Substituting \eqref{eq: covering number of global bound} in \eqref{eq: Rademacher by Dudley inter bound}, the integral item can be bounded as follows,
\begin{equation}\begin{aligned}
    \int_{\alpha}^{\sqrt{m}} \sqrt{\log \mathcal{N}\left(\mathcal{F}_{ECCT}[j], \epsilon,\|\cdot \|_{2} \right)}d \epsilon &\leq \int_{\alpha}^{\sqrt{m}} \sqrt{(L+(2u+2)d^{2})\log\left(1+\frac{12 \sqrt{d} \cdot B L^{2}}{\epsilon}\right)}d \epsilon \\
    &\leq \int_{\alpha}^{\sqrt{m}} \sqrt{\left(L+(2u+2)d^{2}\right)\log\left(\frac{18 \sqrt{d} \cdot B L^{2}}{\epsilon}\right)}d \epsilon \\
    &\leq \sqrt{m(L+(2u+2)d^{2})\log\left(18 \sqrt{md} \cdot B L^{2}\right)},
    \label{eq: Dudley inter uppper bound}
\end{aligned}\end{equation}
where we assume that $\epsilon>0$ is small enough such that $\frac{6\sqrt{d} \cdot B L^{2}}{\epsilon}>1$. Then, by picking $\alpha=1/\sqrt{m}$, using \eqref{eq: Rademacher by Dudley inter bound}, we can obtain the upper bound as follows,
\begin{equation}
    R_{m}(\mathcal{F}_{ECCT}[j]) \leq \frac{4}{\sqrt{m}}+12 \sqrt{\frac{(L+(2 u+2) d^{2}) \log (18 \sqrt{m d}B L^{2})}{m}}.
    \label{eq: Rademacher bound for ECCT j}
\end{equation}

Finally, substituting \eqref{eq: Rademacher bound for ECCT j} in \eqref{eq: generalization bound via Rademacher(ECCT)}, we have
\begin{equation}
    \mathcal{R}_{\mathrm{BER}}(f)-\hat{\mathcal{R}}_{\mathrm{BER}}(f) \leq \frac{4}{\sqrt{m}}+\sqrt{\frac{\log (1 / \delta)}{2 m}}+12 \sqrt{\frac{(L+(2 u+2) d^{2}) \log (18 \sqrt{m d}B L^{2})}{m}}.
\end{equation}

\end{proof}

\section{Proof of Theorem \ref{theorem: impact of mask}} \label{appendix: proof of impact of mask (theorem2)}

In general, the derivation of the generalization bound for the ECCT decoder equipped with masked attention follows a trajectory similar to that of a standard ECCT decoder without masking, both rooted in the framework of covering numbers and Dudley’s entropy integral. However, a fundamental distinction arises: whether the attention mechanism, once constrained by a sparsity mask in the forward pass, strictly preserves the same sparsity structure in its gradients during backpropagation.

This property is the cornerstone of our theoretical analysis. The sparsity of the gradient is a necessary condition for refining the upper bound of the Lipschitz constant $L_{QK}$ with respect to the sparsity level $P$, rather than the full sequence length $L$. It is precisely this inheritance of sparsity from the forward pass to the gradient, as summarized in Lemma \ref{lemma: the sparsity of gradient df/dU}, that allows the theoretical advantages of the sparse mechanism to be manifested in the final generalization bound. In Appendix \ref{appendix: sparsity of masked attention gradient}, we provide a formal proof of this gradient sparsity.

We now proceed to compare the unmasked version (hereinafter referred to as the \emph{dense} version) with the masked version (referred to as the \emph{sparse} version). 
We begin with establishing a basic property of the Softmax gradient that holds for both dense and sparse settings. Let $S_{ij}$ denote the input to the row-wise Softmax function, such that $A_{i j}=\exp \left(S_{i j}\right) / \sum_{k} \exp \left(S_{i k}\right)$. For the dense version, $S_{ij}$ is identical to the logits $U_{ij}$, which is the entry of matrix $\mathbf{U}=\mathbf{X}\mathbf{W}_{QK}\mathbf{X}^{\top}\in  \mathbb{R}^{L\times L}$. The the gradient of $f$ with respect to $U_{ij}$ is:
\begin{equation}
    \begin{aligned}
    \frac{\partial f}{\partial U_{i j}}&=\sum_{k, l} \frac{\partial f}{\partial A_{k l}} \cdot \frac{\partial A_{k l}}{\partial U_{i j}}\\
    &=\sum_{l} \frac{\partial f}{\partial A_{i l}} \cdot \frac{\partial A_{il}}{\partial U_{i j}},
    \end{aligned}
\end{equation}
where the second equality follows from the fact that the row-wise Softmax operation depends only on the entries in the corresponding row of the matrix $\mathbf{U}$. Then, substituting the partial derivative of Softmax, $\frac{\partial A_{i l}}{\partial S_{i j}}=A_{i l}\left(\delta_{l j}-A_{i j}\right)$, we obtain:
\begin{equation}
    \begin{aligned}
\frac{\partial f}{\partial U_{i j}} & =\sum_{l=1}^{L} \frac{\partial f}{\partial A_{i l}}\left[A_{i l}\left(\delta_{l j}-A_{i j}\right)\right] \\
& =\sum_{l=1}^{L} \frac{\partial f}{\partial A_{i l}} A_{i l} \delta_{l j}-\sum_{l=1}^{L} \frac{\partial f}{\partial A_{i l}} A_{i l} A_{i j} \\
& =\frac{\partial f}{\partial A_{i j}} A_{i j}-A_{i j}\left(\sum_{l=1}^{L} \frac{\partial f}{\partial A_{i l}} A_{i l}\right),
\end{aligned}
\end{equation}
where $\delta_{lj}$ equals 1 if and only if $l=j$ otherwise 0.

Let $C_i = \sum_{l=1}^{L} \frac{\partial f}{\partial A_{i l}} A_{i l}$ denote the probability-weighted average of the gradients in row $i$, by the property of the row-wise Softmax operation. The gradient can be briefly written as:
\begin{equation}
    \frac{\partial f}{\partial U_{i j}} = A_{ij} \left( \frac{\partial f}{\partial A_{ij}} - C_i\right).
\end{equation}

Since $0< A_{ij} < 1$, we have $A_{ij}^2 \leq A_{ij}$. Thus:
\begin{equation}
    \sum_{l=1}^{L} \left(\frac{\partial f}{\partial U_{i j}}\right)^{2} =\sum_{l=1}^{L} A_{ij}^{2} \left( \frac{\partial f}{\partial A_{ij}} - C_i\right)^{2} \leq \sum_{l=1}^{L} A_{ij} \left( \frac{\partial f}{\partial A_{ij}} - C_i\right)^{2}.
\end{equation}

The right-hand side of the inequality is exactly the variance of the variable $\frac{\partial f}{\partial A_{ij}}(j=1,\ldots, L)$ under the probability distribution defined by $\mathbf{A}[i,:]$ (i.e., row $i$ of the matrix $\mathbf{A}$). Moreover, for a random variable $X$, we have $\operatorname{Var}(X)= \mathbb{E}(X^2)-[\mathbb{E}(X)]^2 \leq \mathbb{E}(X^2)$. Thus: 
\begin{equation}
    \sum_{j=1}^{L} A_{i j}\left(\frac{\partial f}{\partial A_{i j}}-C_{i}\right)^{2} \leq \sum_{j=1}^{L} A_{i j}\left(\frac{\partial f}{\partial A_{i j}}\right)^{2}.
\end{equation}

Again, using $A_{ij} < 1$:
\begin{equation}
    \sum_{j=1}^{L} A_{i j}\left(\frac{\partial f}{\partial A_{i j}}\right)^{2} \leq \sum_{j=1}^{L} \left(\frac{\partial f}{\partial A_{i j}}\right)^{2}.
\end{equation}
Therefore, for each row $i$, we have proven $\left\|\left(\nabla_{\mathbf{U}} f\right)[i,:]\right\|_{2}^{2} \leq\left\|\left(\nabla_{\mathbf{A}} f\right)[i,:]\right\|_{2}^{2}$.

Summing the squared norms over all rows $i=1, \ldots, L$, we can prove that:
\begin{equation}
    \left\|\nabla_{\mathbf{U}} f\right\|_{F}^{2}=\sum_{i=1}^{L}\left\|\left(\nabla_{\mathbf{U}} f\right)[i,:]\right\|_{2}^{2} \leq \sum_{i=1}^{L}\left\|\left(\nabla_{\mathbf{A}} f\right)[i,:]\right\|_{2}^{2}=\left\|\nabla_{\mathbf{A}} f\right\|_{F}^{2}.
    \label{eq: grad U f <= grad A f, Frobenius norm}
\end{equation}

This inequality serves as the common starting point for the subsequent analysis of both versions.

1) Lipschitz Constant $L_{QK}^{dense}$ for the Dense Version: For the unmasked (dense) decoder, we seek to bound $L_{QK}^{dense}\leq \sup \|\nabla_{\mathbf{W}_{QK}} f\|_F$ (by Lemma \ref{theorem: get Lipschitz constant}). Starting from $\mathbf{U}$, we apply the chain rule and the submultiplicativity of the Frobenius norm. Using \eqref{eq: grad U f <= grad A f, Frobenius norm}, we can expand the gradient bound as follows:
\begin{equation}
    \begin{aligned}
\left\|\nabla_{\mathbf{W}_{Q K}} f\right\|_{F} & \leq\|\mathbf{X}\|_{2}^{2}\left\|\nabla_{\mathbf{U}} f\right\|_{F} \\
& \leq\|\mathbf{X}\|_{2}^{2}\left\|\nabla_{\mathbf{A}} f\right\|_{F} \\
& \leq\|\mathbf{X}\|_{2}^{2}\left(\left\|\mathbf{X}\right\|_{F}\left\|\nabla_{\mathbf{H}_{att}} f\right\|_{F}\right),
\end{aligned}
\end{equation}
where $\mathbf{H}_{att}=\mathbf{X}^{\top}\mathbf{A}$ and its definition is consisent with that in Appendix \ref{appendix: bounding Lips constants}. Recursively bounding the upstream gradients ($\nabla_{\mathbf{H}_{att}}f$) and input terms, we obtain the result consistent with \eqref{eq: L_QK bound (dense)}:
\begin{equation}
    L_{QK}^{dense} \leq b_{o1}b_{o2} L_{\sigma} B_{F 2} B_{F 1} B_{V} L^{1.5} b_{x}^{3}.
\end{equation}

Under Assumption \ref{assumption: bounded input and weight}, we observe that among all the norm upper bounds, only $b_{o2}\leq w\sqrt{L}$ depends on the sequence length $L$. Substituting this bound yields:
\begin{equation}
    L_{QK}^{dense} \leq L^2 b_{x}^{3}wK,
    \label{eq: L_{QK} (dense, using w)}
\end{equation}
where $K=b_{o1} L_{\sigma} B_{F 2} B_{F 1} B_{V}$, and this bound is the same as \eqref{eq: L_QK bound (dense)}.

This refined bound is introduced for comparison with the subsequent sparse version. Specifically, our analysis shifts from the scale of the entire matrix to the element-wise scale. This refinement is necessary because the usage of sparsity requires a finer-grained upper bound to characterize the corresponding Lipschitz constant.

2) Lipschitz Constant $L_{QK}^{sparse}$ for the Sparse Version: 


To bound the element-wise partial derivative $|\frac{\partial f}{\partial A_{ij}}|$, we first examine the gradient flow from the ECCT decoder output back to the attention layer. Given the position-wise nature of the operations succeeding the attention mechanism (including FFN layers $F_1$,  $F_2$, and output projection $o1$), the $t$-th component before the final linear output layer ($\mathbf{H}_{o1}\in \mathbb{R}^{1\times L}$), depends only on the $t$-th column of $\mathbf{H}_{att}$ ($t=1,\ldots,L$). Staring from $\mathbf{H}_{o1}$, we have:
\begin{equation}
    \left|\left(\frac{\partial f}{\partial \mathbf{H}_{o1}}\right)_{t}\right|=\left| W_{o2, tj}\right|\leq w,
\end{equation}
where $W_{o2, tj}$ denotes the $t$-th element of $\mathbf{W}_{o2}[:,j]$, as we focus on the $j$-th final output bit. By applying the chain rule through the intermediate position-wise layers, the gradient with respect to the $t$-th column of $\mathbf{H}_{att}$ is determined as:
\begin{equation}
    \left\| (\nabla_{\mathbf{H}_{att}}f)_t \right\|_2 \leq wK.
\end{equation}

For any index $(i,j)$, the partial derivative of the output with respect to the attention score $A_{ij}$ is given by the inner product: $\frac{\partial f}{\partial A_{ij}}=\mathbf{x}_{i}^{\top}(\nabla_{\mathbf{H}_{att}}f)_j$. Therefore, we obtain the element-wise bound:
\begin{equation}
    \left|\frac{\partial f}{\partial A_{i j}}\right| \leq\left\|\mathbf{x}_{i}\right\|_{2}\left\|\left(\nabla_{H_{\text {att }}} g\right)_{j}\right\|_{2}\leq b_x wK,
\end{equation}
where $\mathbf{x}_{i} \in \mathbb{R}^{d \times 1}$ denotes the $i$-th column of the embedded input $\mathbf{X}$.

For the sparse verion, using \eqref{eq: grad U f <= grad A f, Frobenius norm} and Lemma \ref{lemma: the sparsity of gradient df/dU}, we have:
\begin{equation}
\begin{aligned}
    \left\|\nabla_{\mathbf{U}} f\right\|_{F} &\leq \sqrt{\sum_{(i,j)\in \Omega} \left(\frac{\partial f}{\partial A_{i j}}\right)^{2}} \\
    &\leq \sqrt{\sum_{(i,j)\in \Omega}\left(b_x w K\right)^2} \\
    &\leq \sqrt{LP}\cdot (b_x w K),
\end{aligned}
\end{equation}
where $\Omega$ denotes the set of indices preserved after masking.

Using this bound, we finally obtain:
\begin{equation}
\begin{aligned}
    L_{QK}^{sparse} &\leq \|X\|_{2}^{2}\left\|\nabla_{U} f\right\|_{F} \leq (\sqrt{L}b_x)^{2}\cdot(\sqrt{LP}\cdot (b_x w K)) \\
    &=L^{1.5}\sqrt{P}b_{x}^{3}wK.
\end{aligned}
\end{equation}

Upon obtaining the bound for $L_{QK}^{sparse}$, we observe that the Lipschitz bounds for the remaining constituent weight matrices remain consistent with the dense baseline, as detailed in Appendix \ref{appendix: bounding Lips constants}. By constructing the covering numbers for each individual weight space and considering their Cartesian product, we obtain the covering number for the entire decoder function class. This aggregation process yields the global Lipschitz bound $\Lambda^{sparse}$ defined in Theorem \ref{theorem: impact of mask}, which is strictly dominated by the contribution of $L_{QK}^{sparse}$ due to its higher-order dependency on the sequence length $L$.

Consequently, by following the same trajectory as the proof of Theorem \ref{theorem: bound for single-layer ecct (main result)} in Appendix \ref{appendix: proof of theorem 1}, the proof of Theorem \ref{theorem: impact of mask} is completed.

\section{Gradient Sparsity of Masked Attention} \label{appendix: sparsity of masked attention gradient}

Given an input $\mathbf{X}\in \mathbb{R}^{L\times d}$, let $\mathbf{U}=\mathbf{X}\mathbf{W}_{QK}\mathbf{X}^{\top}\in  \mathbb{R}^{L\times L}$ denote the attention logits before masking. We define $\Omega\subset \{1,\ldots,L\}\times\{1,\ldots,L\}$ as the set of indices preserved by the sparsity mask. For an entry $U_{i j}$ in $\mathbf{U}$, the result of the masking operation is defined as:
\begin{equation}
    S_{i j}=\left\{\begin{array}{ll}
U_{i j} & (i, j) \in \Omega \\
-\infty & (i, j) \notin \Omega
\end{array}\right. 
\end{equation}

The resulting sparse attention matrix $\mathbf{A}$ is obtained via a row-wise Softmax operation, which can be expressed as:
\begin{equation}
    A_{i j}=\operatorname{Softmax}(\mathbf{S})_{i j}=\frac{e^{S_{i j}}}{\sum_{k} e^{S_{i k}}},
\end{equation}
where $A_{ij}$ and $S_{ij}$ is the entry in $\mathbf{A}$ and $\mathbf{S}$, respectively.

\begin{lemma}\label{lemma: the sparsity of gradient df/dU}
    For the ECCT decoder function $f$, the gradient with respect to the pre-masking logits $\mathbf{U}$, denoted as $\nabla_{\mathbf{U}}f$, preserves the same sparsity pattern as $\Omega$.
\end{lemma}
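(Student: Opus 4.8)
The plan is to show directly that $\partial f / \partial U_{ij} = 0$ whenever $(i,j) \notin \Omega$, which is exactly the claim that $\nabla_{\mathbf{U}} f$ is supported on $\Omega$. The starting point is the observation that the masking step annihilates the attention weights at masked positions: since $S_{ij} = -\infty$ for $(i,j) \notin \Omega$, the Softmax numerator $e^{S_{ij}}$ vanishes, so $A_{ij} = 0$ for every $(i,j) \notin \Omega$. Equivalently, in each row $i$ the Softmax is effectively taken only over the active column set $\Omega_i = \{j : (i,j) \in \Omega\}$.

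Next I would propagate this through the chain rule. Because $\operatorname{mask}(\mathbf{H})$ is an additive constant independent of $\mathbf{U}$, we have $\partial f / \partial U_{ij} = \partial f / \partial S_{ij}$, so it suffices to control the latter. Reusing the row-wise Softmax-gradient identity established in the main text (with $\mathbf{S}$ in place of $\mathbf{U}$),
\begin{equation}
    \frac{\partial f}{\partial S_{ij}} = A_{ij}\left(\frac{\partial f}{\partial A_{ij}} - C_i\right), \qquad C_i = \sum_{l=1}^{L} \frac{\partial f}{\partial A_{il}} A_{il},
\end{equation}
the key structural feature is that every entry of the Softmax Jacobian carries an explicit factor $A_{ij}$. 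Substituting $A_{ij} = 0$ for $(i,j) \notin \Omega$ forces $\partial f / \partial S_{ij} = 0$, provided the remaining factor $\partial f / \partial A_{ij} - C_i$ is finite. Finiteness holds because the layers downstream of the attention map (value projection, FFN, and output projection) are differentiable with bounded weights and act on bounded inputs under Assumption \ref{assumption: bounded input and weight}, so $\partial f / \partial A_{il}$ and hence $C_i$ are finite. Collecting these, $\partial f / \partial U_{ij} = 0$ on the complement of $\Omega$, establishing the lemma.

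The main obstacle is making the $-\infty$ convention rigorous, since the product $A_{ij} \cdot (\partial f/\partial A_{ij} - C_i)$ is formally a $0 \cdot \infty$ indeterminate if one naively differentiates through an infinite logit. I would resolve this in one of two equivalent ways. The cleaner route is to note that $f$, viewed as a function of $\mathbf{U}$, factors through the coordinate projection $\mathbf{U} \mapsto (U_{ij})_{(i,j)\in\Omega}$: the masked entries are overwritten before the Softmax and never re-enter the forward computation, so $U_{ij}$ for $(i,j)\notin\Omega$ is a free variable on which $f$ has no functional dependence, and its partial derivative is identically zero. Alternatively, I would replace $-\infty$ by a finite level $M$ and take $M \to -\infty$, verifying that $A_{ij}$ decays like $e^{M} \to 0$ exponentially while $\partial f/\partial A_{ij} - C_i$ stays bounded, so that the limiting gradient vanishes. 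Either argument pins down the claim; I would present the projection argument as primary and remark that it agrees with the additive-mask computation in the limit.
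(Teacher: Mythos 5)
Your proposal is correct and follows essentially the same route as the paper's proof: both reduce the claim to the row-wise Softmax Jacobian factor $A_{ij}$ vanishing on masked entries, combined with the observation that a masked logit $S_{ij}=-\infty$ has no functional dependence on $U_{ij}$ (the paper encodes your "projection" argument as the indicator $\partial S_{mn}/\partial U_{ij}=\delta_{mi}\delta_{nj}\mathbb{I}((i,j)\in\Omega)$). If anything, your explicit handling of the $0\cdot\infty$ indeterminacy via the finite-$M$ limit is more careful than the paper's treatment, which simply asserts the indicator.
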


\begin{proof}
$\nabla_{\mathbf{U}}f$ is computed via the chain rule:
\begin{equation}
    \begin{aligned}
    \frac{\partial f}{\partial U_{i j}}&=\sum_{k, l} \frac{\partial f}{\partial A_{k l}} \cdot \frac{\partial A_{k l}}{\partial U_{i j}}\\
    &=\sum_{k, l} \frac{\partial f}{\partial A_{k l}}\left(\sum_{m, n} \frac{\partial A_{k l}}{\partial S_{m n}} \cdot \frac{\partial S_{m n}}{\partial U_{i j}}\right).
    \end{aligned}
    \label{eq: df / dU_ij}
\end{equation}

The key point of our proof lies in analyzing the sparsity of the partial derivative $\frac{\partial A_{k l}}{\partial U_{i j}}$. Specifically, we aim to demonstrate that the Jacobian matrix inherits the sparsity structure of $\Omega$, ensuring that the gradient $\nabla_{\mathbf{U}}f$ remains $P$-sparse.

In \eqref{eq: df / dU_ij}, the term $\frac{\partial S_{m n}}{\partial U_{i j}}$ can be further written as :
\begin{equation}
    \frac{\partial S_{m n}}{\partial U_{i j}} = \delta_{mi} \delta_{nj}\mathbb{I}\left((i,j)\in \Omega \right),
    \label{eq: dSmn/dUij}
\end{equation}
where for any two non-negative integers $a$ and $b$, $\delta_{ab}$ denotes the Kronecker delta, which equals 1 if and only if $a=b$ and 0 otherwise. The indicator function $\mathbb{I}\left((i,j)\in \Omega \right)$ equals 1 if and only if the index $(i,j) \in \Omega$ and 0 otherwise. This implies that the partial derivative in \eqref{eq: dSmn/dUij} equals 1 if and only if $S_{m n}$ and $U_{i j}$ correspond to the same position (i.e., have the same index), which will not be masked, and equals 0 otherwise.

Therefore, $\frac{\partial A_{k l}}{\partial U_{i j}}$ in \eqref{eq: df / dU_ij} can be further simplified as follows:

\begin{equation}
    \frac{\partial A_{k l}}{\partial U_{i j}}=\frac{\partial A_{k l}}{\partial S_{i j}} \cdot \frac{\partial S_{i j}}{\partial U_{i j}}=\frac{\partial A_{k l}}{\partial S_{i j}} \cdot \mathbb{I}((i, j) \in \Omega).
\end{equation}

We now reduce the problem to analyzing $\frac{\partial A_{k l}}{\partial S_{i j}}$. Since $\mathbf{A}$ is obtained from $\mathbf{U}$ via the row-wise Softmax operation, the partial derivative $\frac{\partial A_{k l}}{\partial S_{i j}}$ can be expressed as:

\begin{equation}
    \frac{\partial A_{k l}}{\partial S_{i j}}=\delta_{k i} \cdot A_{k l}\left(\delta_{l j}-A_{i j}\right).
    \label{eq: dA_kl/S_ij}
\end{equation}

Clearly, when $k \neq i$, \eqref{eq: dA_kl/S_ij} equals 0. Therefore, it suffices to consider the case where $A_{k l}$ and $S_{i j}$ lie in the same row, which is consistent with the property of the row-wise Softmax function. Equation \eqref{eq: dA_kl/S_ij} can be further rewritten as:
\begin{equation}
    \frac{\partial A_{i l}}{\partial S_{i j}}= A_{i l}\left(\delta_{l j}-A_{i j}\right).
\end{equation}

Using the fact that $A_{i j}=0$ if $(i,j)\notin \Omega$, we can easily verify that $\frac{\partial A_{i l}}{\partial S_{i j}}=0$ if this position is masked. Therefore, \eqref{eq: df / dU_ij} can finally be written as:
\begin{equation}
    \begin{aligned}
        \frac{\partial f}{\partial U_{i j}} &= \sum_{l}\frac{\partial f}{\partial A_{il}} \cdot \frac{\partial A_{i l}}{\partial S_{i j}} \\
        &= \begin{cases}
            \sum_{l} \frac{\partial f}{\partial A_{il}}\cdot A_{i l}\left(\delta_{l j}-A_{i j}\right), & (i,j) \in \Omega \\
            0, & (i,j) \notin \Omega
        \end{cases}.
    \end{aligned}
\end{equation}

In summary, we have shown that the gradient $\nabla_{\mathbf{U}}f$ preserves the same sparsity pattern as $\Omega$. This completes the proof.

\end{proof}

\section{Bounding Lipschitz Constants}\label{appendix: bounding Lips constants}
We first present Theorem \ref{theorem: get Lipschitz constant}, a method for computing Lipschitz constants via the gradients of functions. We then exploit properties of matrix norms to derive upper bounds on the norms of intermediate variables, which in turn yield.

We then derive upper bounds on the norms of the relevant intermediate variables and, following the order of gradients in backpropagation together with the chain rule, sequentially obtain Lipschitz constants with respect to each weight matrix.

\begin{theorem}[See \cite{virmaux2018lipschitzConstantEstimation}]\label{theorem: get Lipschitz constant}
    If $f: \mathbb{R}^{n}\rightarrow \mathbb{R}^{m}$ is Lipschitz continuous, then its Lipschitz constant with respect to input $x$ is the maximum norm of its gradient on the domain set:
    \begin{equation}
        L(f) = \sup_{x} \|\nabla_x f \|_2.
    \end{equation}
\end{theorem}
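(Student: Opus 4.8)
The plan is to establish the two matching inequalities $\sup_x \|\nabla_x f\|_2 \le L(f)$ and $L(f) \le \sup_x \|\nabla_x f\|_2$, where $\nabla_x f$ is read as the Jacobian $Df(x) \in \mathbb{R}^{m\times n}$, the symbol $\|\cdot\|_2$ denotes its operator (spectral) norm, and $L(f)$ is the smallest constant satisfying $\|f(y)-f(x)\|_2 \le L(f)\,\|y-x\|_2$ for all $x,y$. Since a Lipschitz map is in general differentiable only almost everywhere, the ``maximum'' in the statement should be understood as the (essential) supremum over the points of differentiability, and this is precisely where the care is needed.

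For the first inequality I would work at any point $x$ where $f$ is differentiable. Fixing a unit vector $v$, the definition of the derivative gives $\|Df(x)v\|_2 = \lim_{t\to 0}\|f(x+tv)-f(x)\|_2/|t|$, and the Lipschitz bound applied inside the limit forces this to be at most $L(f)$. Taking the supremum over $\|v\|_2=1$ yields $\|Df(x)\|_2 \le L(f)$ at every differentiability point, hence $\sup_x\|\nabla_x f\|_2 \le L(f)$. This direction is routine and requires no regularity beyond pointwise differentiability.

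For the reverse inequality I would integrate the derivative along line segments. In the smooth case the fundamental theorem of calculus gives $f(y)-f(x)=\int_0^1 Df\bigl(x+t(y-x)\bigr)(y-x)\,dt$, and bounding the integrand by $\sup_z\|Df(z)\|_2\,\|y-x\|_2$ delivers $L(f)\le\sup_z\|\nabla_z f\|_2$. The subtlety I expect to be the main obstacle is that the integral identity is not immediate when $f$ is merely Lipschitz, because the non-differentiability set $N$ may be nonempty. To resolve this I would invoke Rademacher's theorem (here \cite[Theorem 3.1.6]{federer2014geometricmeasure}) to get $|N|=0$, then observe that for a fixed direction $v=y-x$ the map $t\mapsto f(x+tv)$ is Lipschitz on $[0,1]$ and therefore absolutely continuous, so it equals the integral of its almost-everywhere-defined derivative. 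A Fubini argument shows that for almost every line in direction $v$ the line meets $N$ in one-dimensional measure zero, so along such lines the derivative of $t\mapsto f(x+tv)$ coincides with $Df(\cdot)\,v$, recovering the integral representation; a density/continuity argument then extends the resulting bound to all pairs $(x,y)$. Combining the two inequalities gives $L(f)=\sup_x\|\nabla_x f\|_2$, completing the proof.
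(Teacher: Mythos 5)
Your proof is correct, but there is nothing in the paper to compare it against: the paper states this result as an imported fact (``See \cite{virmaux2018lipschitzConstantEstimation}'') and never proves it, using it only as a tool for bounding the Lipschitz constants $L_{QK}, L_V$, etc.\ in Appendix D. What you have supplied is the standard self-contained argument for the classical identity $L(f)=\operatorname{ess\,sup}_x\|Df(x)\|_2$, and both directions are handled properly: the easy direction via difference quotients at points of differentiability, and the harder direction via Rademacher's theorem, absolute continuity of $t\mapsto f(x+tv)$ on segments, a Fubini argument to ensure that almost every line meets the non-differentiability set $N$ in one-dimensional measure zero, and a continuity/density step to pass from almost all pairs $(x,y)$ to all pairs. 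You are also right to flag that the theorem as stated in the paper is slightly imprecise --- a Lipschitz $f$ need not be differentiable everywhere, so the ``maximum norm of its gradient'' must be read as an essential supremum over the differentiability points, and $\|\nabla_x f\|_2$ as the spectral norm of the Jacobian when $m>1$; your reading is the one under which the statement is true and under which the paper actually uses it. In short: the proposal is a valid proof of a result the paper only cites, and it buys the reader a rigorous justification that the paper delegates to the reference.
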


\emph{Bounding Intermediate Variables}: 
\begin{itemize}
    \item $\mathbf{H}_{att}\triangleq \mathbf{X}^{\top} \operatorname{Softmax}\left(\mathbf{X} \mathbf{W}_{Q K}^{\top} \mathbf{X}^{\top}\right)=\mathbf{X}^{\top} \mathbf{A}$. By the properties of the row-wise softmax operation, we have $\|\mathbf{A}\|_2 \leq \sqrt{L}$, $\left\|\mathbf{H}_{att}\right\|_{F} \leq\left\|\mathbf{X}^{\top} \mathbf{A}\right\|_{F} \leq\left\|\mathbf{X}^{\top}\right\|_{F}\|\mathbf{A}\|_{2}  \leq (\sqrt{L} b_{x})\sqrt{L}=L b_x$. 
    \item $\mathbf{H}_{V}\triangleq \mathbf{W}_{V}^{\top} \mathbf{H}_{att}$, $\|\mathbf{H}_{V}\|_{F} \leq\left\|\mathbf{W}_{V}^{\top}\right\|_{2}\|\mathbf{H}_{att}\|_{F} \leq B_{V}\left(L b_{x}\right)$.
    \item $\mathbf{H}_{F1}\triangleq \mathbf{W}_{F1}^{\top} \mathbf{H}_{V}$, $\left\|\mathbf{H}_{F1}\right\|_{F} \leq\left\|\mathbf{W}_{F1}^{\top}\right\|_{2}\left\|\mathbf{H}_{V}\right\|_{F} \leq B_{F1}\left(B_{V} L b_{x}\right)$.
    \item $\mathbf{H}_{\sigma}\triangleq\sigma\left(\mathbf{H}_{F1}\right)$, $\left\|\mathbf{H}_{\sigma}\right\|_{F} \leq L_{\sigma}\left\|\mathbf{H}_{F1}\right\|_{F} \leq L_{\sigma} B_{F1} B_{V} L b_{x}$.
    \item $\mathbf{H}_{F2}\triangleq\mathbf{W}_{F2}^{\top} \mathbf{H}_{\sigma}$, $\left\|\mathbf{H}_{F2}\right\|_{F} \leq\left\|\mathbf{W}_{F2}^{\top}\right\|_{2}\left\|\mathbf{H}_{\sigma}\right\|_{F} \leq B_{F2}\left(L_{\sigma} B_{F_{1}} B_{V} L b_{x}\right)$.
    \item $\mathbf{H}_{o1}\triangleq\mathbf{W}_{o1}^{\top} \mathbf{H}_{F2}$, $\left\|\mathbf{H}_{o1}\right\|_{F} \leq\left\|\mathbf{W}_{o1}^{\top}\right\|_{2}\left\|\mathbf{H}_{F2}\right\|_{F} \leq b_{o1}\left(B_{F 2} L_{\sigma} B_{F 1} B_{V} L b_{x}\right)$.
    \item The $j$-th output: $f=\mathbf{H}_{o2}\triangleq \mathbf{H}_{o1} \mathbf{W}_{o2}[:,j]$.
\end{itemize}

\emph{Bounding Lipschitz Constants with Respect to Each Weight Matrix}: 

We now derive upper bounds on the Lipschitz constants, with each subscript consistently corresponding to its associated weight matrix.

\begin{itemize}
    \item $L_{o2}$: Since $\nabla_{\mathbf{W}_{o2}[:,j]}f=\mathbf{H}_{o1}^{\top}$, we have the upper bound as $L_{o2} \leq \|\mathbf{H}_{o1}^{\top}\|_2=\|\mathbf{H}_{o1}\|_F \leq b_{o1}B_{F 2} L_{\sigma} B_{F 1} B_{V} L b_{x}$.
    \item $L_{o1}$: Since $\nabla_{\mathbf{W}_{o1}}f=\mathbf{H}_{F2}(\nabla_{\mathbf{H}_{o1}}f)^{\top}$, we first need to bound $\nabla_{\mathbf{H}_{o1}}f$ as $\left\|\nabla_{\mathbf{H}_{o1}}f\right\|_{F}=\left\|\mathbf{W}_{o2}[:,j]^{\top}\right\|_{F}=\left\|\mathbf{W}_{o2}[:,j]\right\|_{2} \leq b_{o 2}$, and then we have $L_{o1} \leq \| \nabla_{\mathbf{W}_{o1}}f \|_2 \leq \|\mathbf{H}_{F2}\|_F \|(\nabla_{\mathbf{H}_{o1}}f)^{\top}\|_F \leq b_{o2}B_{F 2} L_{\sigma} B_{F 1} B_{V} L b_{x}$.
    \item $L_{F2}, L_{F1}, L_{V}$ and $L_{QK}$: Following a similar procedure, we obtain the following upper bounds on the Lipschitz constants with respect to the weight matrices:
    \begin{gather}
        L_{F2}  \leq b_{o1}b_{o2} L_{\sigma} B_{F 1} B_{V} L b_{x}, \\
        L_{F1} \leq b_{o1}b_{o2} L_{\sigma} B_{F 2} B_{V} L b_{x}, \\
        L_{V} \leq b_{o1}b_{o2} L_{\sigma} B_{F 2} B_{F 1} L b_{x}, \\
        L_{QK} \leq b_{o1} L_{\sigma} B_{F 2} B_{F 1} B_{V} L^2 b_{x}^{3}w. \label{eq: L_QK bound (dense)}
    \end{gather}
\end{itemize}

In the derivation above, we extensively exploit properties of the spectral norm and the Frobenius norm of matrix products. In particular, for matrix $\mathbf{A}$ and $\mathbf{B}$, we have $\|\mathbf{A}\mathbf{B}\|_F\leq \|\mathbf{A}\|_2 \|\mathbf{B}\|_F$ .

\section{Proof of Theorem \ref{theorem: multi-layer ecct bound}}

To derive the generalization bound for the multi-layer ECCT decoder $f^{(T)}$, it is insufficient to consider only the partial derivatives with respect to weights in $\mathcal{W}^{(i)}$ for the $i$-th attention layer. Due to the recursive structure of deep networks, for a weight $\mathbf{W}^{(i)} \in \mathcal{W}^{(i)}$, we have:
\begin{equation}
    \frac{\partial f}{\partial \mathbf{W}^{(i)}}=\frac{\partial f}{\partial \mathbf{X}^{(T)}} \cdot\left(\prod_{t=i+1}^{T} \frac{\partial \mathbf{X}^{(t)}}{\partial \mathbf{X}^{(t-1)}}\right) \cdot \frac{\partial \mathbf{X}^{(i)}}{\partial \mathbf{W}^{(i)}}.
\end{equation}

Let $\alpha^{(t)}$ denote the Lipschitz constant of the mapping $\phi^{(t)}: \mathbf{X}^{(t-1)} \rightarrow \mathbf{X}^{(t)}$. By Theorem \ref{theorem: get Lipschitz constant}, this constant bounds the propagation of gradients such that $\left\| \nabla_{\mathbf{X}^{(t-1)}} f \right\|_F \leq \alpha^{(t)} \left\| \nabla_{\mathbf{X}^{(t)}} f \right\|_F$.

To obtain $\alpha^{(t)}$, we analyze the specific structure of the sparse attention layer. Let the mapping $\phi^{(t)}$ be decomposed into the product of the attention branch $\mathbf{A}(\cdot)$ and the value branch $\mathbf{V}(\cdot)$, followed by the output projection. Formally, we express the output $\mathbf{X}^{(t)}$ as:
\begin{equation} 
\mathbf{X}^{(t)} = \mathbf{A}\left(\mathbf{X}^{(t-1)}\right) \cdot \mathbf{V}\left(\mathbf{X}^{(t-1)}\right) \cdot \mathbf{W}_{F 2}^{(t)}. 
\end{equation}

Since the input $\mathbf{X}^{(t-1)}$ influences both the attention $\mathbf{A}$ and and the value $\mathbf{V}$, we apply the product rule of differentiation to get $\alpha^{(t)}$:
\begin{equation}
    \alpha^{(t)} \leq \left\|\mathbf{W}_{F 2}^{(t)}\right\|_2\left(\left\|\mathbf{A}\right\|_2 \left\|\frac{\partial \mathbf{V}}{\partial \mathbf{X}^{(t-1)}}\right\|_2 + \left\|\frac{\partial \mathbf{A}}{\partial \mathbf{X}^{(t-1)}}\right\|_2 \left\|\mathbf{V}\right\|_2\right)L_{\sigma},
    \label{eq: alpha(t) upper bound by product of A V X}
\end{equation}
where we add the $L_{\sigma}$ term according to the Talagrand’s concentration lemma \cite{ledouxTalagrand2013probability}. First, by the properties of the row-wise Softmax function and the norms, we can easily obtain:
\begin{gather}
    \|\mathbf{A}\|_{2} \leq \sqrt{\|\mathbf{A}\|_{1}\|\mathbf{A}\|_{\infty}} \leq \sqrt{P \cdot 1}=\sqrt{P}, \label{eq: appendix proof 3, ||A||} \\
    \|\mathbf{V}\|_{2} \leq \|\mathbf{X}\|_{2}\left\|\mathbf{W}_{V}^{(t)}\right\|_{2}\left\|\mathbf{W}_{F 1}^{(t)}\right\|_{2} \leq b_{x} B_{V} B_{F 1}, \label{eq: appendix proof 3, ||V||}\\
    \left\|\frac{\partial \mathbf{V}}{\partial \mathbf{X}^{(t-1)}}\right\|_2 \leq \left\|\mathbf{W}_{V}^{(t)}\right\|_{2} \left\|\mathbf{W}_{F 1}^{(t)}\right\|_{2} \leq B_{V}B_{F1}. \label{eq: appendix proof 3, ||dV/dX||}
\end{gather}

Then, the derivative of the attention matrix $\mathbf{A}$ involves the bilinear form $\mathbf{X}^{(t-1)} \mathbf{W}_{Q K}^{(t)} \left(\mathbf{X}^{(t-1)}\right)^{\top}$, introducing a factor of 2. Furthermore, since the gradient support is restricted to the unmasked set $\Omega$ (see Appendix \ref{appendix: sparsity of masked attention gradient}), the Frobenius norm of the gradient scales with $\sqrt{P}$ rather than $\sqrt{L}$ relative to the row-wise bounds. Thus, we have
\begin{equation}
    \left\|\frac{\partial \mathbf{A}}{\partial \mathbf{X}^{(t-1)}}\right\|_2 \leq 2\sqrt{P} B_{QK}b_{x}. \label{eq: appendix proof 3, ||dA/dX||}
\end{equation}

Substituting \eqref{eq: appendix proof 3, ||A||}-\eqref{eq: appendix proof 3, ||dA/dX||} in \eqref{eq: alpha(t) upper bound by product of A V X}, we have:
\begin{equation}
    \alpha^{(t)} \leq \sqrt{P} \cdot L_{\sigma} B_{V} B_{F1}B_{F2} \left( 1 + 2 B_{QK} b_x^2 \right).
    \label{eq: alpha^(t), P-sparse}
\end{equation}

Again, using the chain rule, we obtain the Lipschitz constants with respect to weights in $\mathcal{W}^{(i)}$ for the $i$-th attention when $T>1$ and $i<T$:
\begin{gather}
    L_{Q K}^{(i)} \leq b_{o 1}L_{\sigma} B_{V} B_{F 1} B_{F 2}b_{x}^{3}w \cdot L^{1.5} \sqrt{P} \cdot\prod_{t=i+1}^{T}\alpha^{(t)}, \label{eq: L_QK, i-th layer} \\
     L_{F2}^{(i)}  \leq b_{o1}b_{o2} L_{\sigma} B_{F 1} B_{V} L b_{x} \cdot\prod_{t=i+1}^{T}\alpha^{(t)}, \\
     L_{F1}^{(i)} \leq b_{o1}b_{o2} L_{\sigma} B_{F 2} B_{V} L b_{x} \cdot\prod_{t=i+1}^{T}\alpha^{(t)}, \\
     L_{V}^{(i)} \leq b_{o1}b_{o2} L_{\sigma} B_{F 2} B_{F 1} L b_{x} \cdot\prod_{t=i+1}^{T}\alpha^{(t)}. \label{eq: L_V, i-th layer}
\end{gather}

For each weight within the set $\mathcal{W}^{(i)}(i=1,\ldots,T)$ and weights $\mathbf{W}_{o1}, \mathbf{W}_{o2}$, we can further obtain the upper bounds of covering numbers as follows:

\begin{align}
    &\mathcal{N}\left(\mathbf{W}_{QK}^{(i)}, \frac{\epsilon}{(4T+2) L_{QK}^{(i)}}, \| \cdot \|_{F}\right) \leq\left(1+\frac{2 \sqrt{d} B_{QK} \cdot (4T+2) L_{QK}^{(i)}}{\epsilon}\right)^{d^{2}},\label{eq: covering number of QK, i-th layer, multi-layer}\\
    &\mathcal{N}\left(\mathbf{W}_{V}^{(i)}, \frac{\epsilon}{(4T+2) L_{V}^{(i)}}, \| \cdot \|_{F}\right) \leq\left(1+\frac{2 \sqrt{d} B_{V} \cdot (4T+2) L_{V}^{(i)}}{\epsilon}\right)^{d^{2}},\\
    &\mathcal{N}\left(\mathbf{W}_{F1}^{(i)}, \frac{\epsilon}{(4T+2) L_{F1}^{(i)}}, \| \cdot \|_{F}\right) \leq\left(1+\frac{2 \sqrt{d} B_{F1} \cdot (4T+2) L_{F1}^{(i)}}{\epsilon}\right)^{ud^{2}}, \\
    &\mathcal{N}\left(\mathbf{W}_{F2}^{(i)}, \frac{\epsilon}{(4T+2) L_{F2}^{(i)}}, \| \cdot \|_{F}\right) \leq\left(1+\frac{2 \sqrt{d} B_{F2} \cdot (4T+2) L_{F2}^{(i)}}{\epsilon}\right)^{ud^{2}},\\
    &\mathcal{N}\left(\mathbf{W}_{o1}, \frac{\epsilon}{(4T+2) L_{o1}}, \| \cdot \|_{F}\right) \leq\left(1+\frac{2 b_{o1} \cdot (4T+2) L_{o1}}{\epsilon}\right)^{d}, \label{eq: covering number of o1, multi-layer}\\
    &\mathcal{N}\left(\mathbf{W}_{o2}[:,j], \frac{\epsilon}{(4T+2) L_{o2}}, \| \cdot \|_{F}\right) \leq\left(1+\frac{2 b_{o2} \cdot (4T+2) L_{o2}}{\epsilon}\right)^{L}.\label{eq: covering number of o2, multi-layer}
\end{align}

For the $j$-th output of $f^{(T)}$, the covering number can be bounded by the product of the covering numbers for each weight:
\begin{equation}
\begin{aligned}
        \mathcal{N}(\mathcal{F}_{ECCT,T}[j],\epsilon,\|\cdot\|_{2}) &\leq \mathcal{N}\left(\mathbf{W}_{o1}, \frac{\epsilon}{(4T+2) L_{o1}}, \| \cdot \|_{F}\right) \times \\
        &\mathcal{N}\left(\mathbf{W}_{o2}[:,j], \frac{\epsilon}{(4T+2) L_{o2}}, \| \cdot \|_{F}\right) \times \\ 
        &\prod_{i=1}^{T}\prod_{k\in \mathcal{I}_{Att}} \mathcal{N}(\mathbf{W}_{k}^{(i)},\frac{\epsilon}{(4T+2))L_{k}^{(i)}},\|\cdot\|_{F}), 
        \label{eq: global covering number upper bound (prod), multi-layer}
 \end{aligned}       
\end{equation}
where $\mathcal{I}_{Att}=\{QK, V,F1, F2\}$ denotes the index set of weights in an attention layer. We can further upper bound the covering number for the decoder as:
\begin{equation}
    \mathcal{N}(\mathcal{F}_{ECCT,T}[j],\epsilon,\|\cdot\|_{2})  \leq \left(1+\frac{4\sqrt{d}(2T+1)\Lambda^{(T)}}{\epsilon}\right)^{L+(2u+2)d^{2}T}
\end{equation}
where $\Lambda^{(T)}=b_{o1}L_{\sigma}B_{V}B_{F1}B_{F2}b_{x}^{3}wL^{1.5}\sqrt{P}\left(\sqrt{P}B_{V}B_{F1}B_{F2}(1+2B_{QK}b_{x}^{2})\right)^{T-1}$, which is assumed to be large enough to approximate the term inside the logarithm. Assuming $L \gg T$, and using \eqref{eq: Rademacher by Dudley inter bound} and \eqref{eq: generalization bound via Rademacher(ECCT)}, we can obtain the generalizaion bound:
\begin{equation}
    \mathcal{R}_{\mathrm{BER}}(f)-\hat{\mathcal{R}}_{\mathrm{BER}}(f) \leq \frac{4}{\sqrt{m}}+\sqrt{\frac{\log (1 / \delta)}{2 m}}+12 \sqrt{\frac{(L+(2 u+2) d^{2} T) \log (6 \sqrt{m d}\Lambda^{(T)}) }{m}}.
\end{equation}

Based on our previous derivations, the global Lipschitz bound $\Lambda^{(T)}$ for the sparse version of $T$-layer ECCT is dominated by the Lipschitz constant of the bottom-most attention layer ($i=1$). Specifically, under the $P$-sparse constraint, we have established that $\alpha^{(t)} \propto \sqrt{P}$ and $L_{QK}^{(T)}\propto L^{1.5}\sqrt{P}$. Aggregating these through $T-1$ layers, the bound of the sparse version scales as $\Lambda^{(T)} = \mathcal{O} \left( L^{1.5}\sqrt{P} \cdot (\sqrt{P})^{T-1} \right) = \mathcal{O} \left( L^{1.5} P^{T/2} \right)$.

For the dense version, each query attends to all $L$ keys in a row, meaning the spectral norm of the row-wise softmax matrix relaxes from $\left\|\mathbf{A}\right\|_2\leq \sqrt{P}$ to $\left\|\mathbf{A}\right\|_2\leq \sqrt{L}$. Similarly, the gradient support of the attention scores expands from $LP$ to $L^2$ entries, causing the functional Lipschitzness to scale with $L^2$ (i.e, $ L^{1.5}\sqrt{L}$). Consequently, the global bound for the dense version is recovered by replacing all instances of $\sqrt{P}$ with $\sqrt{L}$ in our framework: $\Lambda^{dense,(T)} = \mathcal{O} \left( L^{1.5}\sqrt{L} \cdot (\sqrt{L})^{T-1} \right) = \mathcal{O} \left( L^{1.5} L^{T/2} \right)$.

By comparing these two bounds, we define the contraction factor $\eta^{(T)}$ to characterize the reduction in the hypothesis space volume:
\begin{equation}
    \eta^{(T)} \triangleq \frac{\Lambda^{(T)}}{\Lambda^{dense,(T)}} \leq \left( \sqrt{\frac{P}{L}} \right)^T.
\end{equation}

This completes the proof of Theorem \ref{theorem: multi-layer ecct bound}.

\section{Proof of Theorem \ref{theorem: bound for AWGN}}

For brievty, we define the generalization gap as $\Delta(f)=\mathcal{R}_{\text{BER}}(f)-\hat{\mathcal{R}}_{\text{BER}}(f)$. Using the law of total expectation, we decompose the expected error based on the occurrence of event $\mathcal{E}$:
\begin{equation}
    \mathbb{E}\left[\Delta\right]=\mathbb{E}\left[\Delta\mid \mathcal{E}\right] \mathrm{Pr}\left(\mathcal{E}\right) + \mathbb{E}\left[\Delta\mid \mathcal{E}^c\right] \mathrm{Pr}\left(\mathcal{E}^c\right),
    \label{eq: total expectation of unbounded input}
\end{equation}
where $\mathcal{E}=\left\{\forall i \in[L],\left\|\mathbf{X}[i,:]\right\|_{2} \leq b_{x}\right\}$ and $\mathcal{E}^c$ denots its complement. 

According to the definition of $\mathbb{E}\left[\Delta\mid \mathcal{E}^c\right] \mathrm{Pr}\left(\mathcal{E}^c\right)$, we have:
\begin{equation}
    \begin{aligned}
    \mathbb{E}\left[\Delta \mid \mathcal{E}^{c}\right] \operatorname{Pr}\left(\mathcal{E}^{c}\right) & \leq \sup _{\mathcal{S} \in \mathcal{E}^{c}}|\mathcal{R}_{\text{BER}}(f)-\hat{\mathcal{R}}_{\text{BER}}(f)| \cdot \operatorname{Pr}\left(\mathcal{E}^{c}\right) \\
    & \leq M_{\max } \cdot \operatorname{Pr}\left(\mathcal{E}^{c}\right)
\end{aligned}
\end{equation}
Since the BER loss is strictly bounded by $1$, the generalization gap on the unbounded domain is trivially bounded by the worst-case constant $ M_{\max } =1$. Thus, $\left|\mathbb{E}\left[\Delta \mid \mathcal{E}^{c}\right]\right|\leq 1$. This simplifies the second term of the total expectation in \eqref{eq: total expectation of unbounded input}.

Then, we obtain the following inequality using the fact that any probability is bounded by 1:
\begin{equation}
     \mathbb{E}\left[\Delta\right]\leq\mathbb{E}\left[\Delta\mid \mathcal{E}\right] +  \mathrm{Pr}\left(\mathcal{E}^c\right).
     \label{eq: total expectation of unbounded input, v2}
\end{equation}

For the $i$-th row of input $\mathbf{X}$ (i.e., $\mathbf{X}[i,:]$), we have $\left\|\mathbf{X}[i,:]\right\|_2=|\tilde{y_i}|\cdot \left\|\mathbf{W}_{emb}[i,:]\right\|_2$, where $\tilde{y_i}$ denotes the $i$-th element of the $\tilde{\mathbf{y}}$, and $\mathbf{W}_{emb}[i,:]$ denotes the $i$-th row of the $\mathbf{W}_{emb} \  (i=1,\ldots, L)$. Since only the first $n$ rows of $\mathbf{X}$ are derived from the magnitude of $\mathbf{y}=(y_1,\ldots,y_n)$, which is affected by the noise level, we restrict our analysis to these rows. Thus, for $i=1,\ldots, n$, we have:
\begin{equation}
    \begin{aligned}
    \mathrm{Pr}\left(\left\|\mathbf{X}[i,:]\right\|_{2}>b_{x}\right)&=\mathrm{Pr}\left(\left|y_{i}\right|>\frac{b_{x}}{B_{e m b}}\right) \\
    &= Q(\frac{\tau-x^s_i}{\rho}) + Q(\frac{\tau+x^s_i}{\rho}) \ (\text{Let } \tau=\frac{b_{x}}{B_{e m b}}) \\
    &=Q(\frac{\tau-1}{\rho}) + Q(\frac{\tau+1}{\rho}) (\text{BPSK}),
    \end{aligned}
\end{equation}
where $\rho$ denotes the standard deviation of the AWGN. For the last $r$ rows of $\mathbf{X}$, they are always bounded since the syndromes are strictly bounded by $1$.

Using the union bound, we have:
\begin{equation}
    \mathrm{Pr}\left(\mathcal{E}^c\right) \leq n\left[ Q(\frac{\tau-1}{\rho}) + Q(\frac{\tau+1}{\rho}) \right].
    \label{eq: probability of E^c}
\end{equation}

Substituting \eqref{eq: probability of E^c} in \eqref{eq: total expectation of unbounded input, v2}, and applying the upper bound of the bounded input $\mathbb{E}\left[\Delta\mid \mathcal{E}\right]$ of Theorem \ref{theorem: multi-layer ecct bound} completes the proof of Theorem \ref{theorem: bound for AWGN}.

\bibliographystyle{IEEEtran}
\bibliography{refs}

@article{adiga2024generalizationboundforNBP,
  title={Generalization bounds for neural belief propagation decoders},
  author={Adiga, Sudarshan and Xiao, Xin and Tandon, Ravi and Vasi{\'c}, Bane and Bose, Tamal},
  journal={IEEE Transactions on Information Theory},
  volume={70},
  number={6},
  pages={4280--4296},
  year={2024},
  publisher={IEEE}
}

@article{arvinte2022mimoChannelEstimationusingGenerativeModels,
  title={{MIMO} channel estimation using score-based generative models},
  author={Arvinte, Marius and Tamir, Jonathan I},
  journal={IEEE Transactions on Wireless Communications},
  volume={22},
  number={6},
  pages={3698--3713},
  year={2022},
  publisher={IEEE}
}

@inproceedings{bennatan2018syndromebasedapproach,
  title={Deep learning for decoding of linear codes-a syndrome-based approach},
  author={Bennatan, Amir and Choukroun, Yoni and Kisilev, Pavel},
  booktitle={2018 IEEE International Symposium on Information Theory (ISIT)},
  pages={1595--1599},
  year={2018},
  organization={IEEE}
}

@article{bartlett2017spectrallynormalizedmarginbounds,
  title={Spectrally-normalized margin bounds for neural networks},
  author={Bartlett, Peter L and Foster, Dylan J and Telgarsky, Matus J},
  journal={Advances in neural information processing systems},
  volume={30},
  year={2017}
}

@article{buchberger2020pruningNBPJSAC,
  title={Pruning and quantizing neural belief propagation decoders},
  author={Buchberger, Andreas and H{\"a}ger, Christian and Pfister, Henry D and Schmalen, Laurent and i Amat, Alexandre Graell},
  journal={IEEE Journal on Selected Areas in Communications},
  volume={39},
  number={7},
  pages={1957--1966},
  year={2020},
  publisher={IEEE}
}

@inproceedings{chen2021cyclicallyNBP,
  title={Cyclically Equivariant Neural Decoders for Cyclic Codes},
  author={Chen, Xiangyu and Ye, Min},
  booktitle={International Conference on Machine Learning (ICML)},
  pages={1771--1780},
  year={2021},
  organization={PMLR}
}

@inproceedings{chen2020generalizationofafamilyofRNN,
  title={On Generalization Bounds of a Family of Recurrent Neural Networks},
  author={Chen, Minshuo and Li, Xingguo and Zhao, Tuo},
  booktitle={International Conference on Artificial Intelligence and Statistics},
  pages={1233--1243},
  year={2020},
  organization={PMLR}
}

@inproceedings{choukroun2024learningECC,
  title={Learning linear block error correction codes},
  author={Choukroun, Yoni and Wolf, Lior},
  booktitle={Proceedings of the 41st International Conference on Machine Learning},
  pages={8801--8814},
  year={2024}
}

@article{choukroun2022ECCT,
  title={Error correction code transformer},
  author={Choukroun, Yoni and Wolf, Lior},
  journal={Advances in Neural Information Processing Systems},
  volume={35},
  pages={38695--38705},
  year={2022}
}

@article{dai2021learningtodecodeprotographldpcNBP,
  title={Learning to decode protograph {LDPC} codes},
  author={Dai, Jincheng and Tan, Kailin and Si, Zhongwei and Niu, Kai and Chen, Mingzhe and Poor, H Vincent and Cui, Shuguang},
  journal={IEEE Journal on Selected Areas in Communications},
  volume={39},
  number={7},
  pages={1983--1999},
  year={2021},
  publisher={IEEE}
}

@article{dudley1967sizesofcompactsubsetsofhilbertspace,
  title={The sizes of compact subsets of Hilbert space and continuity of Gaussian processes},
  author={Dudley, Richard M},
  journal={Journal of Functional Analysis},
  volume={1},
  number={3},
  pages={290--330},
  year={1967},
  publisher={Elsevier}
}

@inproceedings{edelman2022inductivebiasTransformerGeneralization,
  title={Inductive biases and variable creation in self-attention mechanisms},
  author={Edelman, Benjamin L and Goel, Surbhi and Kakade, Sham and Zhang, Cyril},
  booktitle={International Conference on Machine Learning},
  pages={5793--5831},
  year={2022},
  organization={PMLR}
}

@book{federer2014geometricmeasure,
  title={Geometric measure theory},
  author={Federer, Herbert},
  year={2014},
  publisher={Springer}
}

@article{gu2021knowledgeassistedDLin5Gto6G,
  title={Knowledge-assisted deep reinforcement learning in {5G} scheduler design: From theoretical framework to implementation},
  author={Gu, Zhouyou and She, Changyang and Hardjawana, Wibowo and Lumb, Simon and McKechnie, David and Essery, Todd and Vucetic, Branka},
  journal={IEEE Journal on Selected Areas in Communications},
  volume={39},
  number={7},
  pages={2014--2028},
  year={2021},
  publisher={IEEE}
}

@article{hu2020DLforChannelEsitimationSurvey,
  title={Deep learning for channel estimation: Interpretation, performance, and comparison},
  author={Hu, Qiang and Gao, Feifei and Zhang, Hao and Jin, Shi and Li, Geoffrey Ye},
  journal={IEEE Transactions on Wireless Communications},
  volume={20},
  number={4},
  pages={2398--2412},
  year={2020},
  publisher={IEEE}
}

@inproceedings{jamali2022productae,
  title={ProductAE: Toward training larger channel codes based on neural product codes},
  author={Jamali, Mohammad Vahid and Saber, Hamid and Hatami, Homayoon and Bae, Jung Hyun},
  booktitle={ICC 2022-IEEE International Conference on Communications},
  pages={3898--3903},
  year={2022},
  organization={IEEE}
}

@article{jiang2019turboAE,
  title={Turbo autoencoder: Deep learning based channel codes for point-to-point communication channels},
  author={Jiang, Yihan and Kim, Hyeji and Asnani, Himanshu and Kannan, Sreeram and Oh, Sewoong and Viswanath, Pramod},
  journal={Advances in neural information processing systems},
  volume={32},
  year={2019}
}

@article{jiang2022accurateChannelPredictionTransformer,
  title={Accurate channel prediction based on transformer: Making mobility negligible},
  author={Jiang, Hao and Cui, Mingyao and Ng, Derrick Wing Kwan and Dai, Linglong},
  journal={IEEE Journal on Selected Areas in Communications},
  volume={40},
  number={9},
  pages={2717--2732},
  year={2022},
  publisher={IEEE}
}

@article{khani2020adaptiveSignalDetectionforMIMO,
  title={Adaptive neural signal detection for massive {MIMO}},
  author={Khani, Mehrdad and Alizadeh, Mohammad and Hoydis, Jakob and Fleming, Phil},
  journal={IEEE Transactions on Wireless Communications},
  volume={19},
  number={8},
  pages={5635--5648},
  year={2020},
  publisher={IEEE}
}

@article{krogh1991simpleweightdecay,
  title={A simple weight decay can improve generalization},
  author={Krogh, Anders and Hertz, John},
  journal={Advances in neural information processing systems},
  volume={4},
  year={1991}
}

@book{ledouxTalagrand2013probability,
  title={Probability in Banach Spaces: isoperimetry and processes},
  author={Ledoux, Michel and Talagrand, Michel},
  year={2013},
  publisher={Springer Science \& Business Media}
}

@inproceedings{lau2025interplayBPandTransformerECCT,
  title={Interplay Between Belief Propagation and Transformer: Differential-Attention Message Passing Transformer},
  author={Lau, Chin Wa Ken and Shi, Xiang and Zheng, Ziyan and Cao, Haiwen and Guo, Nian},
  booktitle={2025 IEEE International Symposium on Information Theory (ISIT)},
  pages={1--6},
  year={2025},
  organization={IEEE}
}

@article{loshchilov2017decoupledweightdecay,
  title={Decoupled weight decay regularization},
  author={Loshchilov, Ilya and Hutter, Frank},
  journal={arXiv preprint arXiv:1711.05101},
  year={2017}
}

@inproceedings{lugosch2017neuraloffsetNBP,
  title={Neural offset min-sum decoding},
  author={Lugosch, Loren and Gross, Warren J},
  booktitle={2017 IEEE International Symposium on Information Theory (ISIT)},
  pages={1361--1365},
  year={2017},
  organization={IEEE}
}

@inproceedings{makkuva2021KOCodes,
  title={Ko codes: inventing nonlinear encoding and decoding for reliable wireless communication via deep-learning},
  author={Makkuva, Ashok V and Liu, Xiyang and Jamali, Mohammad Vahid and Mahdavifar, Hessam and Oh, Sewoong and Viswanath, Pramod},
  booktitle={International Conference on Machine Learning},
  pages={7368--7378},
  year={2021},
  organization={PMLR}
}

@article{miyato2018spectralnormalization,
  title={Spectral normalization for generative adversarial networks},
  author={Miyato, Takeru and Kataoka, Toshiki and Koyama, Masanori and Yoshida, Yuichi},
  journal={arXiv preprint arXiv:1802.05957},
  year={2018}
}

@book{mohri2018foundationsofMachineLearningML,
  title={Foundations of machine learning},
  author={Mohri, Mehryar and Rostamizadeh, Afshin and Talwalkar, Ameet},
  year={2018},
  publisher={MIT press}
}

@inproceedings{nachmani2016learningNBPconference,
  title={Learning to decode linear codes using deep learning},
  author={Nachmani, Eliya and Be'ery, Yair and Burshtein, David},
  booktitle={2016 54th Annual Allerton Conference on Communication, Control, and Computing (Allerton)},
  pages={341--346},
  year={2016},
  organization={IEEE}
}

@article{nachmani2018deeplearningNBP,
  title={Deep learning methods for improved decoding of linear codes},
  author={Nachmani, Eliya and Marciano, Elad and Lugosch, Loren and Gross, Warren J and Burshtein, David and Be’ery, Yair},
  journal={IEEE Journal of Selected Topics in Signal Processing},
  volume={12},
  number={1},
  pages={119--131},
  year={2018},
  publisher={IEEE}
}

@article{park2025multipleMasktforECCT,
  title={Multiple-Masks Error Correction Code Transformer for Short Block Codes},
  author={Park, Seong-Joon and Kwak, Hee-Youl and Kim, Sang-Hyo and Kim, Sunghwan and Kim, Yongjune and No, Jong-Seon},
  journal={IEEE Journal on Selected Areas in Communications},
  year={2025},
  publisher={IEEE}
}

@inproceedings{park2025crossmpt,
  title={{CrossMPT}: Cross-attention Message-passing Transformer for Error Correcting Codes},
  author={Park, Seong-Joon and Kwak, Hee-Youl and Kim, Sang-Hyo and Kim, Yongjune and No, Jong-Seon},
  booktitle={The Thirteenth International Conference on Learning Representations},
  year={2025}
}

@article{richardson2002LDPCcapacityundermessagepassingdecoding,
  title={The capacity of low-density parity-check codes under message-passing decoding},
  author={Richardson, Thomas J and Urbanke, R{\"u}diger L},
  journal={IEEE Transactions on Information Theory},
  volume={47},
  number={2},
  pages={599--618},
  year={2002},
  publisher={IEEE}
}

@article{soltani2019DLbasedchannelestimation,
  title={Deep learning-based channel estimation},
  author={Soltani, Mehran and Pourahmadi, Vahid and Mirzaei, Ali and Sheikhzadeh, Hamid},
  journal={IEEE Communications Letters},
  volume={23},
  number={4},
  pages={652--655},
  year={2019},
  publisher={IEEE}
}

@book{thomson2008elementaryrealanalysis,
  title={Elementary real analysis},
  author={Thomson, Brian S and Bruckner, Judith B and Bruckner, Andrew M},
  volume={1},
  year={2008},
  publisher={Prentice Hall (Pearson)}
}

@inproceedings{trauger2024sequence,
  title={Sequence length independent norm-based generalization bounds for transformers},
  author={Trauger, Jacob and Tewari, Ambuj},
  booktitle={International Conference on Artificial Intelligence and Statistics},
  pages={1405--1413},
  year={2024},
  organization={PMLR}
}

@article{vaswani2017attentionisallyouneed,
  title={Attention is all you need},
  author={Vaswani, Ashish and Shazeer, Noam and Parmar, Niki and Uszkoreit, Jakob and Jones, Llion and Gomez, Aidan N and Kaiser, {\L}ukasz and Polosukhin, Illia},
  journal={Advances in neural information processing systems},
  volume={30},
  year={2017}
}

@article{virmaux2018lipschitzConstantEstimation,
  title={Lipschitz regularity of deep neural networks: analysis and efficient estimation},
  author={Virmaux, Aladin and Scaman, Kevin},
  journal={Advances in Neural Information Processing Systems},
  volume={31},
  year={2018}
}

@book{ryan2009channel,
  title={Channel codes: classical and modern},
  author={Ryan, William and Lin, Shu},
  year={2009},
  publisher={Cambridge university press}
}

@inproceedings{zhang2023adaptiveProductAE,
  title={Adaptive productae: Snr-aware adaptive decoding of neural product codes},
  author={Zhang, Qinshan and Chen, Bin and Huang, Yujun and Xia, Shu-Tao},
  booktitle={ICC 2023-IEEE International Conference on Communications},
  pages={6337--6342},
  year={2023},
  organization={IEEE}
}

@inproceedings{zhang2024section-wiseNPBforQCLDPC,
  title={Section-Wise Revolving NBP-Like Decoders for QC-LDPC Codes},
  author={Zhang, Qinshan and Chen, Bin and Zhuang, Tianqu and Jiang, Yong and Xia, Shu-Tao},
  booktitle={2024 IEEE International Symposium on Information Theory (ISIT)},
  pages={1397--1402},
  year={2024},
  organization={IEEE}
}












\end{document}